\newtheorem{theorem}{Theorem}
\newtheorem{lemma}{Lemma}
\newtheorem{corollary}{Corollary}
\newtheorem{proposition}{Proposition}
\newcommand{\bA}{\mathbf{A}}
\newcommand{\bB}{\mathbf{B}}
\newcommand{\bb}{\mathbf{b}}
\newcommand{\be}{\mathbf{e}}
\newcommand{\bI}{\mathbf{I}}
\newcommand{\bv}{\mathbf{v}}
\newcommand{\bX}{\mathbf{X}}
\newcommand{\bx}{\mathbf{x}}
\newcommand{\by}{\mathbf{y}}
\newcommand{\cE}{\mathcal{E}}
\newcommand{\cN}{\mathcal{N}}
\newcommand{\cU}{\mathcal{U}}
\newcommand{\sT}{\mathsf{T}}
\newcommand{\balpha}{\bm{\alpha}}
\newcommand{\bbeta}{\bm{\beta}}
\newcommand{\bdelta}{\bm{\delta}}
\newcommand{\bgamma}{\bm{\gamma}}
\newcommand{\bmu}{\bm{\mu}}
\newcommand{\bSigma}{\bm{\Sigma}}
\newcommand{\bpsi}{\mbox{\boldmath $\psi$}}
\newcommand{\bzero}{\boldsymbol{0}}
\newcommand{\rP}{\mathrm{P}}
\newcommand{\diag}{\mathrm{diag}}
\newcommand{\R}{\mathbb{R}}
\newcommand{\argmax}{\operatornamewithlimits{argmax}}
\newcommand{\argmin}{\operatornamewithlimits{argmin}}
\DeclarePairedDelimiter\parens{\lparen}{\rparen}
\DeclarePairedDelimiter\braces{\lbrace}{\rbrace}
\numberwithin{equation}{section}
\renewcommand{\baselinestretch}{1.12}
\title{Quadratic Discriminant Analysis by Projection}
\date{}
\author{Ruiyang Wu and Ning Hao \hspace{.2cm}\\
    Department of Mathematics, University of Arizona}
\begin{document}
\maketitle

\begin{abstract}
  Discriminant analysis, including linear discriminant analysis (LDA)
  and quadratic discriminant analysis (QDA), is a popular approach to
  classification problems. It is well known that LDA is suboptimal to
  analyze heteroscedastic data, for which QDA would be an ideal
  tool. However, QDA is less helpful when the number of features in a
  data set is moderate or high, and LDA and its variants often perform
  better due to their robustness against dimensionality. In this work,
  we introduce a new dimension reduction and classification method
  based on QDA\@. In particular, we define and estimate the optimal
  one-dimensional (1D) subspace for QDA, which is a novel hybrid
  approach to discriminant analysis. The new method can handle data
  heteroscedasticity with number of parameters equal to that of
  LDA\@. Therefore, it is more stable than the standard QDA and works
  well for data in moderate dimensions.  We show an estimation
  consistency property of our method, and compare it with LDA, QDA,
  regularized discriminant analysis (RDA) and a few other competitors
  by simulated and real data examples.
\end{abstract}
\noindent%
{\it Keywords:\/} Classification, Consistency, Heteroscedasticity,
Invariance, Normality.

\section{Introduction}\label{sec:introduction}

Discriminant analysis is a standard tool for classification. For
example, LDA and QDA aim to find hyperplanes and quadratic
hypersurfaces, respectively, to separate the data points. LDA is one
of the most popular techniques for classification because of its
simplicity and robustness against growing
dimensionality. Nevertheless, the performance of LDA relies on the
equal covariance assumption. In contrast, QDA allows data
heteroscedasticity. The cost of the flexibility is to estimate more
parameters of the QDA model, which requires a large sample size. To
make the QDA approach more robust, \cite{friedman1989regularized}
proposed regularized discriminant analysis (RDA), which shrinks the
separate covariances of different classes toward a common pooled
covariance that can be further shrunken to a diagonal matrix when
necessary. The level of shrinkage is controlled by tuning parameters,
which are often tuned by cross-validation. As a compromise between LDA
and QDA, RDA is a successful classification tool which has been
further developed in~\cite{guo2007regularized}.

Based on Fisher's original idea \citep{fisher1936use}, LDA aims to
find a 1D projection which best separates the data. Fisher suggested
the direction that maximizes the ratio of between-class variance to
within-class variance. Under the Gaussian and equal covariance
assumption, the population version of LDA rule, or PoLDA for short, is
the optimal classification rule. This implies two facts. First, there
is no information loss to project the data onto the PoLDA
direction. Second, PoLDA minimizes classification error. These
properties of PoLDA do not hold under data heteroscedasticity. In
general, it is impossible to project the data to a 1D subspace without
loss of information. Even if a good projection exists, QDA might be a
better choice than LDA to separate the projected data. This motivates
us to study the optimal 1D projection for heteroscedastic data. To
elaborate, we will define an optimal direction in which the projected
data are separated by QDA with least classification error. We propose
an algorithm to approximate this optimal direction and show its
consistency. With strong heteroscedasticity, our method can outplay
LDA methods. Because the number of parameters in our algorithm is
similar to that of LDA, our method needs a smaller sample size than
QDA does.

In this work, we are mainly concerned with data sets with $p<n<p^2$
where $n$ is the sample size and $p$ is the number of features. In
this situation, LDA may be seriously biased because of data
heteroscedasticity, and QDA is not stable numerically due to
dimensionality. Our method offers an alternative classification tool
for practitioners.  We have to point out that, in the last 20 years,
there are many works on discriminant analysis for high and ultra-high
dimensional data, \cite{li2015sparse}, \cite{jiang2018direct},
\cite{wu2019quadratic}, \cite{gaynanova2019sparse}, just to name a few
closely related to QDA\@. We refer readers to two review papers
\citep{mai2013review,qin2018review} for more comprehensive summaries
of recent developments. A majority of these works are based on various
sparsity assumptions. In contrast, we do not impose sparsity
assumptions and our method shares invariance property with the
classical LDA and QDA methods. Therefore, we do not suggest to apply
our method to high dimensional data directly.

\section{Classification by 1D Projection}
\label{sec:class-1d-proj}
\subsection{Optimal 1D Projection for Heteroscedastic Gaussian data}
\label{sec:opt-1d-proj}

Let \(\bX\) be a \(p\)-dimensional random vector, and \(Y\in\{0,1\}\)
be its class label with \(\bX|\{Y=k\} \sim \cN(\bmu_k,\bSigma_k)\),
and \(\rP(Y=k) = \pi_k\), \(k = 0,1\), where \(\bmu_k\)'s are
\(p\)-dimensional vectors and \(\bSigma_k\)'s are \(p\) by \(p\)
symmetric positive definite matrices. Define
$\bSigma=\pi_0\bSigma_0+\pi_1\bSigma_1$, which is the weighted average
of within-class covariances. Note that $\bSigma$ is the common
within-class covariance for homoscedastic data, and when
$\bSigma_0 \neq \bSigma_1$, it is the expectation of estimated
within-class covariance under a misspecified homoscedastic model. We
assume $\pi_0=\frac12$ in this paper for easy presentation.

For a heteroscedastic Gaussian model with known parameters, the QDA
rule is optimal in a sense that it minimizes the classification error
for any $\bX\in\mathbb{R}^p$. It labels an observation $\bX=\bx^*$ by
class 1 when
\begin{multline}\label{qdarule}
  {\bx^*}^{\mathsf{T}}\parens*{\bSigma_0^{-1} - \bSigma_1^{-1}} \bx^*
  -2{\bx^*}^{\mathsf{T}} \parens*{\bSigma_0^{-1}\bmu_0 - \bSigma_1^{-1}\bmu_1}\\
  +\bmu_0^{\mathsf{T}}\bSigma_0^{-1}\bmu_0 -
  \bmu_1^{\mathsf{T}}\bSigma_1^{-1}\bmu_1+\log\parens*{|\bSigma_0|/|\bSigma_1|}
  >0.
\end{multline}

The discriminant boundary of the QDA rule is a quadratic hypersurface,
determined by $p(p+3)/2$ parameters. In practice, when $p$ is moderate
or large, it is difficult to estimate the quadratic boundary
accurately due to the large parameter space. While dimension reduction
is a plausible approach to reduce the number of parameters, it is
impossible to reduce the sample space without loss of classification
power for general covariances $\bSigma_0$ and $\bSigma_1$. In
contrast, LDA assumes equal covariance
\(\bSigma=\bSigma_0 = \bSigma_1\), under which the QDA
rule~\eqref{qdarule} reduces to the LDA rule
\begin{equation}\label{ldarule}
  -{\bx^*}^{\mathsf{T}} \bSigma^{-1}(\bmu_0 - \bmu_1) + \frac12 (\bmu_0 + \bmu_1)^{\mathsf{T}}\bSigma^{-1}(\bmu_0 - \bmu_1) > 0.
\end{equation} 
In this special case, the optimal discriminant boundary is a
hyperplane with the normal vector
\begin{equation}
  \label{eq:1}
  \bbeta=\bSigma^{-1}(\bmu_0-\bmu_1).
\end{equation}
The classification error of the optimal rule is 
\begin{equation}
  \Phi\parens*{-\sqrt{(\bmu_0-\bmu_1)^{\sT}\bSigma^{-1}(\bmu_0-\bmu_1)}/2}=\Phi\parens*{-\sqrt{\bbeta^{\sT}\bSigma\bbeta}/2},    
\end{equation}
where $\Phi(\cdot)$ is the cumulative distribution function (CDF) of a
standard normal random variable. Note that for any nonzero vector
$\balpha$, the distribution of \(\balpha^{\sT}\bX|\{Y=k\}\) is
$\cN(\balpha^{\sT}\bmu_k,\balpha^{\sT}\bSigma\balpha)$. It is
straightforward to derive that the LDA rule in the direction $\balpha$
can achieve classification error
$\Phi(-|\balpha^{\mathsf{T}}(\bmu_0-\bmu_1)|/(2\sqrt{\balpha^{\mathsf{T}}\bSigma\balpha}))$,
with a minimal value \(\Phi(-\sqrt{\bbeta^{\sT}\bSigma\bbeta}/2)\)
when $\balpha=c\bbeta$ for any $c\ne0$. In summary, the direction of
$\bbeta$ is the optimal direction to project the data to achieve the
best classification accuracy. More importantly, we won't lose any
classification power after dimension reduction to this 1D
subspace. This is one of the reasons that the LDA-based approach is
more popular than QDA in data analysis. For the downside, LDA is
suboptimal when the data is heteroscedastic. First of all, the LDA
direction, calculated by the same formula
$\bbeta=\bSigma^{-1}(\bmu_0-\bmu_1)$, with
$\bSigma=\pi_0\bSigma_0+\pi_1\bSigma_1$, is not the best direction to
project the data onto. In a special case when \(\bmu_0 = \bmu_1\) and
\(\bSigma_0 = \bI\) and \(\bSigma_1 = \diag\{2,1,\dots,1\}\), the
direction \(\be_1=(1,0,\dots,0)^{\sT}\) is the best, but
$\bbeta=\bzero$. An estimator to $\bbeta$ would give a random and
uninformative direction. Second, even if the best direction is known,
the 1D LDA is outperformed by the 1D QDA after the projection, when
the marginal variances are not equal. While the second issue is minor
and easy to fix, in this paper, we aim to define and estimate the
optimal 1D projection for heteroscedastic Gaussian data.

For a classification rule $\Psi:\mathbb{R}^p\to\{0,1\}$, the
classification error is defined by $\rP(\Psi(\bX)\ne Y)$. Let $E_0$ be
the classification error of the QDA rule defined in~\eqref{qdarule},
and $E_{LDA}$ be the classification error of LDA~\eqref{ldarule} with
$\bSigma=\pi_0\bSigma_0+\pi_1\bSigma_1$ under
heteroscedasticity. Under a projection of $\bX$ to a 1D subspace
spanned by $\balpha$, define $E(\balpha)$ and $E_{LDA}(\balpha)$ by
the classification errors of the QDA and LDA rules for the projected
data. Then we have
\begin{equation}
  \label{eq:2}
  E_0\leq \min_{\balpha\ne\bzero}E(\balpha)\leq \min_{\balpha\ne\bzero}E_{LDA}(\balpha)\leq E_{LDA}.
\end{equation}
The equal signs in (\ref{eq:2}) hold in the special case when
$\bSigma_0=\bSigma_1$. In general cases, it is impossible to approach
$E_0$ empirically if $p^2>n$. Nevertheless, it is easier to estimate
the direction that minimizes $E(\balpha)$. We show an explicit formula
for $E(\balpha)$ in the following theorem.

\begin{theorem}
  \label{thr:1}
  Assume \(\bX|\{Y=k\} \sim \cN(\bmu_k,\bSigma_k)\),
  $\pi_k=P(Y=k)=\frac12$, \(k = 0,1\). Let
  \(m_k = \balpha^{\sT} \bmu_k\),
  \(\sigma_k^2 = \balpha^{\sT} \bSigma_k \balpha\), where \(k = 0,1\),
  \(\balpha \in\R^p \setminus \{\bzero\}\). Then the classification
  error function for 1D QDA \(E \colon \R^p \backslash \{0\} \to \R\)
  in (\ref{eq:2}) satisfies
  \begin{equation}
    \label{eq:3}
    E(\balpha)=
    \begin{dcases}
      \Phi \parens*{-\frac{|m_0 - m_1|}{2\sigma}} & \sigma_0 =
      \sigma_1 := \sigma,\\
      \begin{multlined}[0.6\linewidth]
        \frac12 + \frac{1}{2} \Phi \parens*{{\frac{\sigma_1(m_1 - m_0) -
              \sigma_0 \sqrt{\Delta}}{\sigma_0^2 - \sigma_1^2}}}\\
        -\frac{1}{2} \Phi \parens*{{\frac{\sigma_1(m_1 - m_0) +
              \sigma_0 \sqrt{\Delta}}{\sigma_0^2 - \sigma_1^2}}}\\
        +\frac{1}{2} \Phi \parens*{{\frac{\sigma_0(m_1 - m_0) +
              \sigma_1 \sqrt{\Delta}}{\sigma_0^2 - \sigma_1^2}}}\\
        -\frac{1}{2} \Phi \parens*{{\frac{\sigma_0(m_1 - m_0) -
              \sigma_1 \sqrt{\Delta}}{\sigma_0^2 - \sigma_1^2}}}
      \end{multlined}
      & \sigma_0 \neq \sigma_1,
    \end{dcases}
  \end{equation}
  where
  \(\Delta = (m_0 - m_1)^2 + (\sigma_0^2 - \sigma_1^2) \log(\sigma_0^2
  / \sigma_1^2)\).
\end{theorem}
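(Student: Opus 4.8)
The plan is to reduce the statement to a one--dimensional problem and then evaluate the resulting Gaussian probabilities explicitly. First I would project: put $Z=\balpha^{\sT}\bX$, so that $Z\mid\{Y=k\}\sim\cN(m_k,\sigma_k^2)$, $k=0,1$, with the priors unchanged. Since $E(\balpha)$ is the error of the QDA rule applied to the projected data, specializing~\eqref{qdarule} to $p=1$ shows that $Z=z$ is labeled class $1$ exactly when $g(z)>0$, where
\[
 g(z)=\frac{(z-m_0)^2}{\sigma_0^2}-\frac{(z-m_1)^2}{\sigma_1^2}+\log\frac{\sigma_0^2}{\sigma_1^2}.
\]
Writing $R=\{z:g(z)>0\}$ for the class--$1$ acceptance region and using $\pi_0=\pi_1=\tfrac12$,
\[
 E(\balpha)=\tfrac12\,\rP(Z\in R\mid Y=0)+\tfrac12\,\rP(Z\notin R\mid Y=1)=\tfrac12+\tfrac12\bigl[\rP(Z\in R\mid Y=0)-\rP(Z\in R\mid Y=1)\bigr],
\]
so it all comes down to identifying $R$ and computing two normal probabilities.

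For the first branch, suppose $\sigma_0=\sigma_1=:\sigma$. Then the quadratic term cancels, $g(z)=(m_0-m_1)(m_0+m_1-2z)/\sigma^2$ is affine, and $R$ is a half--line with endpoint $(m_0+m_1)/2$. Substituting that endpoint into the two $\Phi$'s and using $\Phi(t)+\Phi(-t)=1$ collapses the difference to $\Phi(-|m_0-m_1|/(2\sigma))$. (If in addition $m_0=m_1$ the rule is degenerate and the error is $\tfrac12=\Phi(0)$, still matching the formula.)

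For the second branch, suppose $\sigma_0\neq\sigma_1$; then $g$ is a genuine quadratic with leading coefficient $1/\sigma_0^2-1/\sigma_1^2$. The main computational step is to show, after clearing denominators and applying the quadratic formula, that the roots of $g$ are $z_\pm=\bigl[(m_0\sigma_1^2-m_1\sigma_0^2)\pm\sigma_0\sigma_1\sqrt{\Delta}\bigr]/(\sigma_1^2-\sigma_0^2)$, i.e.\ that the discriminant of the cleared quadratic collapses to a multiple of $\Delta$. Here $\Delta>0$ because $(\sigma_0^2-\sigma_1^2)$ and $\log(\sigma_0^2/\sigma_1^2)$ always share a sign, so $g$ has two distinct real roots. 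Standardizing them against each class, one checks that $(z_\pm-m_1)/\sigma_1$ and $(z_\pm-m_0)/\sigma_0$ simplify to exactly the four arguments of the $\Phi$'s in~\eqref{eq:3}.

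Finally I would assemble the error. According to the sign of $1/\sigma_0^2-1/\sigma_1^2$, the region $R$ is either the open interval between the two roots (when $\sigma_0>\sigma_1$) or the complement of that interval (when $\sigma_0<\sigma_1$), and the ordering of $z_-$ and $z_+$ likewise depends on that sign. The one bookkeeping point that needs care is to verify that in both subcases the contributions at $\pm\infty$ cancel in the difference $\rP(Z\in R\mid Y=0)-\rP(Z\in R\mid Y=1)$, so that $E(\balpha)$ reduces to the same linear combination of $\Phi\bigl((z_\pm-m_0)/\sigma_0\bigr)$ and $\Phi\bigl((z_\pm-m_1)/\sigma_1\bigr)$ regardless of which variance is larger; plugging in the standardized roots then gives the stated expression. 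The only genuine obstacle is the discriminant identity: once $z_\pm$ are in closed form, everything else is routine case analysis and substitution.
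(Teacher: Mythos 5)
Your proposal is correct and follows essentially the same route as the paper's proof: project onto $\balpha$, reduce the 1D QDA rule to the sign of a quadratic, solve for the roots (your discriminant does collapse to $\sigma_0^2\sigma_1^2\Delta$, giving exactly the paper's $r_1,r_2$), and evaluate the two Gaussian probabilities, with the equal-variance case treated separately as 1D LDA. Your explicit handling of both sign cases of the leading coefficient is just the detail the paper dismisses with ``a similar calculation,'' so there is no substantive difference.
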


We define a direction
$\balpha_0\in \argmin_{\balpha\neq\bzero} E(\balpha)$ an optimal
direction for 1D QDA\@. The following proposition summarizes two
well-known special cases when close-form solutions can be derived to
minimize $E(\balpha)$.
\begin{proposition}
  \label{prop:1}
  Under the assumptions in Theorem~\ref{thr:1}, the following results
  hold.
  \begin{enumerate}
  \item If \(\bSigma_0 = \bSigma_1=\bSigma\),
    \begin{align}
      \label{eq:4}
      \balpha_0 = \argmin_{\balpha\neq \bzero} E(\balpha) =
      \argmax_{\balpha\ne\bzero} \parens*{\frac{\balpha^{\sT}
      (\bmu_0 - \bmu_1)(\bmu_0 - \bmu_1)^{\sT}
      \balpha}{\balpha^{\sT} \bSigma \balpha}}=\bSigma^{-1}(\bmu_1-\bmu_0).
    \end{align}
  \item If \(\bmu_0 = \bmu_1\),
    \begin{align}
      \label{eq:5}
      \balpha_0 = \argmin_{\balpha\neq\bzero} E(\balpha) =
      \argmax_{\balpha\ne\bzero} \left(\max
      \braces*{\frac{\balpha^{\sT} \bSigma_1
      \balpha}{\balpha^{\sT} \bSigma_0
      \balpha},\frac{\balpha^{\sT} \bSigma_0
      \balpha}{\balpha^{\sT} \bSigma_1 \balpha}}\right).
    \end{align}
    If there is a unique maximum among all the eigenvalues of
    \(\bSigma_0^{-1}\bSigma_1\) and \(\bSigma_1^{-1}\bSigma_0\), then
    \(\balpha_0\) is the eigenvector corresponding to the greatest
    eigenvalue.
  \end{enumerate}
\end{proposition}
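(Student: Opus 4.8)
The plan is to handle the two parts separately; in each we first simplify $E(\balpha)$ to a form that is transparently optimized, then carry out the optimization. For part~1, $\bSigma_0=\bSigma_1=\bSigma$ forces $\sigma_0=\sigma_1=\sqrt{\balpha^{\sT}\bSigma\balpha}$, so Theorem~\ref{thr:1} gives $E(\balpha)=\Phi\bigl(-|m_0-m_1|/(2\sqrt{\balpha^{\sT}\bSigma\balpha})\bigr)$. As $\Phi$ is strictly increasing, $\argmin_{\balpha\neq\bzero}E(\balpha)$ equals the set of maximizers of $(m_0-m_1)^2/(\balpha^{\sT}\bSigma\balpha)=\balpha^{\sT}(\bmu_0-\bmu_1)(\bmu_0-\bmu_1)^{\sT}\balpha/(\balpha^{\sT}\bSigma\balpha)$, the Rayleigh quotient in~\eqref{eq:4}. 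To pin down the maximizer I would set $\bbf=\bmu_0-\bmu_1$ and apply Cauchy--Schwarz in the inner product $\langle\bu,\bv\rangle=\bu^{\sT}\bSigma\bv$ to $\balpha$ and $\bSigma^{-1}\bbf$: $(\balpha^{\sT}\bbf)^2=\langle\balpha,\bSigma^{-1}\bbf\rangle^2\le(\balpha^{\sT}\bSigma\balpha)(\bbf^{\sT}\bSigma^{-1}\bbf)$, with equality iff $\balpha\propto\bSigma^{-1}\bbf$; hence $\balpha_0=c\,\bSigma^{-1}(\bmu_1-\bmu_0)$ for any $c\neq0$.

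For part~2 I would substitute $m_0=m_1$ into Theorem~\ref{thr:1}, so $\Delta=(\sigma_0^2-\sigma_1^2)\log(\sigma_0^2/\sigma_1^2)$, which is positive exactly when $\sigma_0\neq\sigma_1$. The four summands in the second branch of~\eqref{eq:3} are merely permuted among themselves when the labels $0,1$ are swapped, so when evaluating $E(\balpha)$ we may take $\sigma_0\ge\sigma_1$: if $\sigma_0=\sigma_1$ then $E(\balpha)=\Phi(0)=\tfrac12$, and if $\sigma_0>\sigma_1$, substituting $m_0=m_1$ and using $\Phi(-x)=1-\Phi(x)$ collapses~\eqref{eq:3} to $E(\balpha)=\tfrac12-\bigl(\Phi(a)-\Phi(b)\bigr)$, where $a=\sqrt{\rho\log\rho/(\rho-1)}$, $b=\sqrt{\log\rho/(\rho-1)}$, and $\rho=\sigma_0^2/\sigma_1^2$. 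Thus $E(\balpha)$ depends on $\balpha$ only through $\rho(\balpha)=\max\{\sigma_0^2/\sigma_1^2,\sigma_1^2/\sigma_0^2\}\ge1$, equalling $\tfrac12$ when $\rho(\balpha)=1$, and the task becomes to show $E$ is strictly decreasing in $\rho(\balpha)$ and then to maximize $\rho(\balpha)$.

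The monotonicity step is what I expect to be the main obstacle. The plan is to show $\Phi(a(\rho))-\Phi(b(\rho))$ is strictly increasing on $(1,\infty)$ by differentiating $a(\rho)^2=\rho\log\rho/(\rho-1)$ and $b(\rho)^2=\log\rho/(\rho-1)$: the numerators of the derivatives work out to $(\rho-1)-\log\rho$ and $1-\rho^{-1}-\log\rho$, which are respectively positive and negative for $\rho>1$ --- these are exactly the classical bounds $1-x^{-1}<\log x<x-1$. Hence $a$ increases and $b$ decreases in $\rho$, so $\Phi(a)-\Phi(b)$ is strictly increasing, with limit $0$ as $\rho\to1^+$ (both $a,b\to1$), matching the boundary value $\tfrac12$. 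Therefore $\argmin_{\balpha\neq\bzero}E(\balpha)=\argmax_{\balpha\neq\bzero}\rho(\balpha)$, which is the characterization~\eqref{eq:5}.

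Finally, for the eigenvector description I would use $\sup_\balpha\max\{f(\balpha),g(\balpha)\}=\max\{\sup_\balpha f,\sup_\balpha g\}$ together with the Rayleigh identity $\sup_{\balpha\neq\bzero}\balpha^{\sT}\bSigma_1\balpha/(\balpha^{\sT}\bSigma_0\balpha)=\lambda_{\max}(\bSigma_0^{-1}\bSigma_1)$, attained at the corresponding generalized eigenvector (substitute $\bgamma=\bSigma_0^{1/2}\balpha$ and apply the Rayleigh principle to the symmetric matrix $\bSigma_0^{-1/2}\bSigma_1\bSigma_0^{-1/2}$), and symmetrically with $\bSigma_0,\bSigma_1$ interchanged. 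Since $\bSigma_0^{-1}\bSigma_1$ and $\bSigma_1^{-1}\bSigma_0$ have the same eigenvectors with reciprocal eigenvalues, $\sup_\balpha\rho(\balpha)$ is the largest eigenvalue occurring in either matrix, attained at its eigenvector, and --- when that eigenvalue is simple and strictly dominates all the others --- the maximizer is unique up to scaling, which is $\balpha_0$. (If $\bSigma_0\neq\bSigma_1$ this largest eigenvalue exceeds $1$, so $\balpha_0$ indeed satisfies $\sigma_0\neq\sigma_1$, consistent with having used the second branch of~\eqref{eq:3}; the fully degenerate case $\bSigma_0=\bSigma_1$, $\bmu_0=\bmu_1$ makes $E\equiv\tfrac12$ and is vacuous.)
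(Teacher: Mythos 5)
Your argument is correct, and it is worth noting that the paper itself offers no proof of Proposition~\ref{prop:1}: it simply records these two cases as ``well-known'' closed-form solutions, so there is nothing in the text to compare against and your write-up in effect supplies the missing derivation. Part~1 is the standard Fisher/LDA computation, and your Cauchy--Schwarz argument in the $\bSigma$-inner product is a clean way to identify the maximizer of the Rayleigh quotient (implicitly you need $\bmu_0\neq\bmu_1$ for the maximizer to be unique up to scale; otherwise $E\equiv 1/2$ and every direction is optimal, just as in your degenerate remark for part~2). Part~2 is where real work is needed, and your reduction is sound: with $m_0=m_1$ the label-swap symmetry lets you take $\sigma_0\geq\sigma_1$, the four $\Phi$ terms collapse via $\Phi(-x)=1-\Phi(x)$ to $E=\tfrac12-\bigl(\Phi(a)-\Phi(b)\bigr)$ with $a^2=\rho\log\rho/(\rho-1)$, $b^2=\log\rho/(\rho-1)$, and the derivative numerators $(\rho-1)-\log\rho>0$ and $1-\rho^{-1}-\log\rho<0$ do give $a$ increasing and $b$ decreasing, hence $E$ strictly decreasing in $\rho(\balpha)=\max\{\sigma_0^2/\sigma_1^2,\sigma_1^2/\sigma_0^2\}$, with the correct boundary value $\tfrac12$ as $\rho\to1^+$. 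The final step---$\sup_{\balpha}\rho(\balpha)$ equals the largest eigenvalue among $\bSigma_0^{-1}\bSigma_1$ and $\bSigma_1^{-1}\bSigma_0$ via the symmetrization $\bSigma_0^{-1/2}\bSigma_1\bSigma_0^{-1/2}$, attained at the corresponding (generalized) eigenvector, unique up to scale when that eigenvalue strictly dominates---matches the uniqueness hypothesis in the statement. In short: no gap; your proof is a complete and essentially the natural argument for a result the paper states without proof.
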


When the number of features is moderate, e.g., $p<n<p^2$, the standard
QDA is not stable empirically. As an alternative approach, we attempt
to estimate the best 1D subspace for dimension reduction before
conducting QDA\@. Intuitively, this approach is more robust than the
standard QDA because much fewer parameters are needed. In particular,
it requires $p-1$ parameters for direction estimation and two more
parameters for the quadratic boundary after projection. Thus the total
number of parameters is similar to that of LDA\@. As a result, our
method performs similarly to LDA for homoscedastic data, and it is
more sensitive to data heteroscedasticity than the LDA approach.

The LDA direction in (\ref{eq:4}) is well-defined and unique up to a
scalar when $\bmu_0\ne\bmu_1$. However, the optimal direction to 1D
QDA might not be unique, especially when some symmetric structure
occurs in the model. For example, in (\ref{eq:5}), if
$\bSigma_0=c_0\bI$ and $\bSigma_1=c_1\bI$ with $c_0\ne c_1$, every
direction is an optimal direction because of symmetry. In general, the
optimal direction would be unique up to a scalar although it is
difficult to specify the exact conditions on uniqueness.

\subsection{Method and computation}

Let \(\{\bx^i_k : 1 \leq i \leq n_k\}\) be i.i.d.\ observations from
\(\bX|\{Y=k\}\), \(k=0,1\). With Theorem~\ref{thr:1}, we can
approximate the classification error \(E(\balpha)\) with
\(\hat{E}(\balpha)\), which is equation~\eqref{eq:3} plugged in by
sample means \(\hat{\bmu}_0\) \(\hat{\bmu}_1\) and sample covariance
matrices \(\hat{\bSigma}_0\) \(\hat{\bSigma}_1\). We then find the
minimizing direction \(\hat{\balpha}_0\) of
\(\hat{E}(\balpha)\). After projecting all the training data and test
data to \(\hat{\balpha}_0\), predictions are made based on the 1D QDA
rule of the projected data. We call this procedure QDA by projection
(QDAP), which is summarized in Algorithm~\ref{alg:1}.

\begin{algorithm}
  \caption{QDA by Projection (QDAP)}
  \label{alg:1}
  \begin{algorithmic}[1]
    \State
    \(\hat{\balpha}_0 \gets \argmin_{\balpha\neq\bzero}
    \hat{E}(\balpha)\)
    \State
    \(x^i_k \gets \hat{\balpha} {}_0^{\sT} \bx^i_k\),
    \(k = 0,1\); \(i = 1,\dots,n_k\)
    \State
    \(\hat{\phi}(x) \gets \mbox{1D QDA rule derived with projected
      data } \{x^i_k\}_{k=0,1}^{i=1,\dots,n_k}\)
    \State
    \Return
    \(\hat{y} \gets \hat{\phi}(x) \mbox{ for any new observation }
    \bx\), where \(x = \hat{\balpha} {}_0^{\sT} \bx\)
  \end{algorithmic}
\end{algorithm}

By Corollary~\ref{cor:2} and~\ref{cor:3} in the appendix,
$\hat{E}(\balpha)$ is smooth almost everywhere, and it is uniformly
continuous when viewed as a function defined on the unit sphere. Thus
the existence of minimizer is guaranteed by the compactness of the
unit sphere. We implemented a coordinate descent algorithm to conduct
the optimization. Proposition~\ref{prop:1} provides two good initial
directions, i.e.,~\eqref{eq:4} and \eqref{eq:5} to warm start the
coordinate descent algorithm. The implementation details are
illustrated in Appendix~\ref{sec:coord-desc-algor}.

\section{Theoretical Properties}
\label{sec:theor-prop}

Proposition~\ref{prop:1} shows that LDA is a special case of our
method in the population level. Thanks to the explicit formula, it is
straightforward to see that the LDA direction can be consistently
estimated. The following theorem shows a counterpart result for the 1D
QDA\@. As a by-product, it implies our method performs similar to LDA
under the equal covariance assumption.
\begin{theorem}\label{thr:2}
  Assume that \(\bX|\{Y=k\} \sim \cN(\bmu_k,\bSigma_k)\), \(k =
  0,1\). Let \(\{\bx^i_k : i \geq 1\}\) be a sequence of i.i.d.\
  observations from \(\bX|\{Y=k\}\),
  \(\smash{\hat{\bmu} {}_0^n,\hat{\bmu} {}_1^n,\hat{\bSigma}
    {}_0^n,\hat{\bSigma} {}_1^n}\) be sample means and sample
  covariance matrices calculated with first \(n\) observations in each
  class, and \(\hat{E}^n(\balpha)\) be the empirical classification
  error, i.e.~\eqref{eq:3} with the previous estimates plugged in. If
  \(E(\balpha)\) has a unique minimizer
  \[\balpha_0 = \argmin\limits_{\balpha \in \mathbb{P}^{p-1}} E(\balpha)\]
  and assume that
  \[\hat{\balpha}_0^n \in \argmin\limits_{\balpha \in \mathbb{P}^{p-1}} \hat{E}^n(\balpha)\]
  then
  \[\hat{\balpha} {}_0^n \xrightarrow{a.s.} \balpha_0 \; \text{ as } n \to \infty.\]
\end{theorem}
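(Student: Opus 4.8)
The plan is to deduce the almost-sure convergence $\hat\balpha_0^n\to\balpha_0$ from a standard M-estimation / argmax-consistency argument, using the explicit formula in Theorem~\ref{thr:1} to verify the regularity hypotheses. First I would reduce the domain to the compact projective space $\mathbb{P}^{p-1}$ (equivalently the unit sphere modulo sign), on which $E$ is continuous: indeed the first branch of \eqref{eq:3} is continuous in $\balpha$ on the set $\{\sigma_0=\sigma_1\}$, and one checks that along any sequence $\sigma_0\to\sigma_1$ the second branch converges to the first branch, so $E$ extends to a continuous function on all of $\mathbb{P}^{p-1}$ (this continuity is presumably what Corollaries~\ref{cor:2}--\ref{cor:3} in the appendix supply; I would cite them). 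Since $\mathbb{P}^{p-1}$ is compact and $E$ has by hypothesis a \emph{unique} minimizer $\balpha_0$, the map $E$ is ``well-separated'' at $\balpha_0$: for every neighborhood $U$ of $\balpha_0$, $\inf_{\balpha\notin U}E(\balpha)>E(\balpha_0)$.

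Next I would establish uniform convergence $\sup_{\balpha\in\mathbb{P}^{p-1}}|\hat E^n(\balpha)-E(\balpha)|\xrightarrow{a.s.}0$. The ingredients are: (i) by the strong law of large numbers, $\hat\bmu_k^n\xrightarrow{a.s.}\bmu_k$ and $\hat\bSigma_k^n\xrightarrow{a.s.}\bSigma_k$; (ii) the functions $(m_0,m_1,\sigma_0^2,\sigma_1^2)\mapsto\Phi(\cdots)$ appearing in \eqref{eq:3}, together with $\Delta$, are jointly continuous in the parameters wherever $\sigma_0,\sigma_1>0$, and the apparent singularities at $\sigma_0=\sigma_1$ are removable as noted above; (iii) $\balpha\mapsto(\balpha^\sT\bmu_k,\,\balpha^\sT\bSigma_k\balpha)$ depends continuously on $\balpha$ and on the plugged-in parameters, with $\balpha^\sT\bSigma_k\balpha$ bounded below by $\lambda_{\min}(\bSigma_k)>0$ uniformly over the sphere once $\hat\bSigma_k^n$ is close enough to $\bSigma_k$ (true eventually, a.s.). Composing, $(\hat\bmu_k^n,\hat\bSigma_k^n;\balpha)\mapsto\hat E^n(\balpha)$ is a continuous function of $\big((\hat\bmu_k^n,\hat\bSigma_k^n)_{k},\balpha\big)$ on a set of the form $K_\epsilon\times\mathbb{P}^{p-1}$ with $K_\epsilon$ a compact neighborhood of $(\bmu_k,\bSigma_k)_k$; a continuous function on a compact set is uniformly continuous, and combined with (i) this yields the desired uniform convergence on the sphere, a.s.

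Finally I would invoke the classical argmax consistency lemma: if $\hat E^n\to E$ uniformly on the compact set $\mathbb{P}^{p-1}$, $E$ has a well-separated unique minimizer $\balpha_0$, and $\hat\balpha_0^n$ minimizes $\hat E^n$, then $\hat\balpha_0^n\to\balpha_0$. The short argument is: for any neighborhood $U$ of $\balpha_0$, pick $\eta>0$ with $\inf_{\balpha\notin U}E(\balpha)\ge E(\balpha_0)+2\eta$; for $n$ large (a.s.) $\|\hat E^n-E\|_\infty<\eta$, so any minimizer $\hat\balpha_0^n$ must lie in $U$, since a point outside $U$ would have $\hat E^n$-value $>E(\balpha_0)+\eta\ge \hat E^n(\balpha_0)$, a contradiction. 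Applying this on a countable neighborhood base of $\balpha_0$ and intersecting the corresponding probability-one events gives $\hat\balpha_0^n\xrightarrow{a.s.}\balpha_0$.

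The main obstacle I anticipate is the careful treatment of the piecewise definition of $E$ near $\sigma_0=\sigma_1$: one must verify that $\hat E^n$ and $E$ are genuinely continuous (not merely measurable) across this locus and that the $1/(\sigma_0^2-\sigma_1^2)$ terms in \eqref{eq:3} cause no blow-up, so that the uniform-continuity-on-a-compact argument applies. Handling the removable singularity — e.g.\ via a Taylor expansion of $\Phi$ showing the four terms combine to the first-branch value as $\sigma_0^2-\sigma_1^2\to0$ — is the technical heart; everything else is the standard uniform-LLN-plus-argmax machinery, and this is presumably exactly the content of Corollaries~\ref{cor:2} and~\ref{cor:3} cited in the paper.
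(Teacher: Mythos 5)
Your proposal is correct, and it reaches the same two milestones as the paper's proof---almost-sure uniform convergence of \(\hat E^n\) to \(E\) on the compact space \(\mathbb{P}^{p-1}\), followed by an argmin-consistency step exploiting the unique minimizer---but the technical route differs in two places. For the uniform convergence, the paper first invokes Egorov's theorem to obtain events \(\Omega_i\) on which \(\hat\bmu_k^n,\hat\bSigma_k^n\) converge uniformly in \(\omega\), and then pushes this through the composition machinery of Lemma~\ref{lem:1} and Lemma~\ref{lem:2} (the totally-bounded-image argument), applied to the continuous \(\cE\) of Proposition~\ref{prop:3} after securing the lower bound \(\balpha^{\sT}\hat\bSigma_k^n\balpha\ge\lambda/2\). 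You instead argue \(\omega\)-pointwise, using joint (hence uniform) continuity of \((\theta,\balpha)\mapsto E_\theta(\balpha)\) on a compact product \(K_\epsilon\times\mathbb{P}^{p-1}\); this makes the Egorov step unnecessary and is arguably leaner, while buying essentially the same conclusion. One citation caveat: what your step (ii) really needs is not the continuity of \(E\) in \(\balpha\) alone (Corollaries~\ref{cor:2}--\ref{cor:3}) but the continuity of \(\cE(r,g)\) jointly in \((r,g)\) across \(g=1\), i.e.\ Proposition~\ref{prop:3}, which is precisely what removes the \(1/(\sigma_0^2-\sigma_1^2)\) singularity you flag as the technical heart; with \(\hat\sigma_k^n(\balpha)\) bounded below uniformly in \(\balpha\) for all large \(n\) (a.s.), as you require, the joint-continuity argument then goes through. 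For the final step, your well-separation argument (\(\inf_{\balpha\notin U}E>E(\balpha_0)\) by compactness plus uniqueness of the minimizer) and the paper's Lemma~\ref{lem:3} (a subsequence/compactness contradiction) are interchangeable standard arguments resting on the same hypotheses, and once uniform convergence is established on a single probability-one event the rest of your argument is deterministic, so the intersection over a countable neighborhood base is harmless but not needed.
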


Since the classification error function \(E\) depends only on the
direction of vectors in \(\R^p\backslash\{0\}\), it is essentially a
function defined on the \(p - 1\) dimensional real projective space
\(\mathbb{P}^{p - 1}\), which consists of all one dimensional
subspaces of \(\R^p\). (see Corollary~\ref{cor:3} and \ref{cor:4} in
appendix for details). Practically, we may simply view $\balpha$ as a
unit vector up to a sign. To make the theorem mathematically rigorous,
we use \(\mathbb{P}^{p - 1}\) as the domain of $\balpha$. It is
standard in mathematics to denote the one dimensional subspace spanned
by a vector \(\balpha\) by equivalent class \([\balpha]\). But we will
omit the brackets for easy presentation whenever there is no
ambiguity.

LDA and QDA share an invariance property, which ensures that the
classification result is unaffected by any invertible affine
transformation of the data. To elaborate, if we apply the same
nonsingular linear transformation to the training data and future test
data, the prediction results of LDA and QDA will not change. The
following proposition indicates that the invariance property also
holds for our method.
\begin{proposition}
  \label{prop:2}
  For \(k = 0, 1\), let \(\{\bx^i_k : 1 \leq i \leq n_k\}\) be i.i.d
  observations from \(\bX|\{Y=k\}\), and
  \(\tilde\bx^i_k = \bb + \bA\bx^i_k\), where \(\bb \in \R^p\),
  \(\bA\) is a \(p\) by \(p\) full rank matrix. Let
  \(\hat{\balpha}_0\) (\(\hat{\tilde\balpha}_0\)) be the unique (up to
  a scalar) minimizer in step 1 of Algorithm~\ref{alg:1}, with
  \(\hat{E}\) (\(\hat{\tilde E}\)) derived from training data
  \(\{\bx^i_k\}\) (\(\{\tilde\bx^i_k\}\)). Then the following equation
  holds:
  \[\hat{\tilde\balpha}_0=c\parens*{\bA^{\sT}}^{-1}\hat{\balpha}_0,\]
  where \(c\) is a nonzero constant.
\end{proposition}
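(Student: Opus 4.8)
The plan is to exploit the fact that the empirical error function $\hat E$ depends on $\balpha$ only through the four scalars $m_0,m_1,\sigma_0^2,\sigma_1^2$, and to track how these scalars transform under the affine map $\bx\mapsto\bb+\bA\bx$. First I would record the effect of the transformation on the plug-in estimates: $\hat{\tilde\bmu}_k=\bb+\bA\hat\bmu_k$ and $\hat{\tilde\bSigma}_k=\bA\hat\bSigma_k\bA^{\sT}$, which follow immediately from the definitions of sample mean and sample covariance and the linearity of the map. Then, for a direction $\tilde\balpha$ in the transformed problem, the relevant scalars are $\tilde m_k=\tilde\balpha^{\sT}\hat{\tilde\bmu}_k=\tilde\balpha^{\sT}\bb+(\bA^{\sT}\tilde\balpha)^{\sT}\hat\bmu_k$ and $\tilde\sigma_k^2=\tilde\balpha^{\sT}\hat{\tilde\bSigma}_k\tilde\balpha=(\bA^{\sT}\tilde\balpha)^{\sT}\hat\bSigma_k(\bA^{\sT}\tilde\balpha)$.

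The key step is the observation that $\hat{\tilde E}(\tilde\balpha)$ depends on $\tilde m_0,\tilde m_1$ only through the difference $\tilde m_1-\tilde m_0$ — inspect formula~\eqref{eq:3}: every occurrence of the means is as $m_1-m_0$ or $(m_0-m_1)^2$. The inconvenient additive term $\tilde\balpha^{\sT}\bb$ therefore cancels, giving $\tilde m_1-\tilde m_0=(\bA^{\sT}\tilde\balpha)^{\sT}(\hat\bmu_1-\hat\bmu_0)$. Setting $\balpha:=\bA^{\sT}\tilde\balpha$, we get $\tilde m_1-\tilde m_0=m_1-m_0$ and $\tilde\sigma_k^2=\sigma_k^2$ exactly, so $\hat{\tilde E}(\tilde\balpha)=\hat E(\bA^{\sT}\tilde\balpha)$ for every $\tilde\balpha\ne\bzero$. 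Since $\bA^{\sT}$ is a bijection of $\R^p\setminus\{\bzero\}$ (equivalently of $\mathbb{P}^{p-1}$), minimizing the left side over $\tilde\balpha$ is the same as minimizing $\hat E$ over $\bA^{\sT}\tilde\balpha$; hence $\hat{\tilde\balpha}_0$ is a minimizer of $\hat{\tilde E}$ if and only if $\bA^{\sT}\hat{\tilde\balpha}_0$ is a minimizer of $\hat E$, i.e.\ $\bA^{\sT}\hat{\tilde\balpha}_0=c'\hat\balpha_0$ for some nonzero scalar $c'$ by the uniqueness-up-to-scalar hypothesis. Solving, $\hat{\tilde\balpha}_0=c'(\bA^{\sT})^{-1}\hat\balpha_0$, which is the claim with $c=c'$.

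I do not expect a genuine obstacle here; the proposition is essentially a bookkeeping exercise. The one point that deserves care is the cancellation of the intercept $\bb$: it works precisely because $E$ is translation-invariant in the projected means, a structural feature of~\eqref{eq:3} that must be pointed to explicitly rather than assumed. A secondary point is to make sure the degenerate branch $\sigma_0=\sigma_1$ and the non-degenerate branch $\sigma_0\ne\sigma_1$ are handled together — but since $\bA^{\sT}$ preserves each $\sigma_k^2$ individually, the condition $\sigma_0=\sigma_1$ is preserved, so the same branch of the piecewise formula is active on both sides of the identity $\hat{\tilde E}(\tilde\balpha)=\hat E(\bA^{\sT}\tilde\balpha)$, and no case analysis beyond this remark is needed. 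Finally, one should note that uniqueness of the minimizer up to a scalar is invoked only to pin down $c$; the identity of minimizing \emph{sets} holds unconditionally.
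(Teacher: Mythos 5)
Your proposal is correct and follows essentially the same route as the paper's own proof: compute how the sample means and covariances transform under the affine map, observe that the plug-in error depends only on $\hat m_0-\hat m_1$, $\hat\sigma_0$, $\hat\sigma_1$ (so the intercept $\bb$ cancels and $\hat{\tilde E}(\tilde\balpha)=\hat E(\bA^{\sT}\tilde\balpha)$), and then invoke the uniqueness-up-to-scalar assumption to identify the minimizers via the bijection $(\bA^{\sT})^{-1}$. No gaps; your explicit remarks about the two branches of~\eqref{eq:3} and about where uniqueness is actually used are refinements the paper leaves implicit.
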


This implies
\(\tilde x^i_k=\hat{\tilde\balpha}^{\sT}_0 \tilde\bx^i_k =
c\hat{\balpha} {}_0^{\sT} \bA^{-1} \bb + c\hat{\balpha} {}_0^{\sT}
\bx^i_k = c\hat{\balpha}{}_0^{\sT} \bA^{-1} \bb + cx^i_k\), where
$x^i_k$ is the projected data defined in Algorithm~\ref{alg:1}, step
2. That is, the projected data before and after transformation,
$x^i_k$ and $\tilde x^i_k$ are up to an affine transformation. It
implies

\begin{corollary}
  \label{cor:1}
  Algorithm~\ref{alg:1} is invariant under invertible affine
  transformations.
\end{corollary}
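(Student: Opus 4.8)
The plan is to track how the plug-in error function $\hat E$ transforms under the affine map and then exploit the invertibility of $\bA$. First I would record the elementary transformation rules for the sample statistics: if $\tilde\bx^i_k=\bb+\bA\bx^i_k$, then $\hat{\tilde\bmu}_k=\bb+\bA\hat\bmu_k$ and $\hat{\tilde\bSigma}_k=\bA\hat\bSigma_k\bA^{\sT}$ for $k=0,1$. Consequently, for a direction $\balpha$, writing $\bgamma=\bA^{\sT}\balpha$, the quantities that feed into formula~\eqref{eq:3} satisfy $\tilde m_k:=\balpha^{\sT}\hat{\tilde\bmu}_k=\balpha^{\sT}\bb+\bgamma^{\sT}\hat\bmu_k$ and $\tilde\sigma_k^2:=\balpha^{\sT}\hat{\tilde\bSigma}_k\balpha=\bgamma^{\sT}\hat\bSigma_k\bgamma$.

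The key observation is that the right-hand side of~\eqref{eq:3} depends on $m_0,m_1$ only through the difference $m_0-m_1$ (every occurrence is $|m_0-m_1|$, $m_1-m_0$, or $(m_0-m_1)^2$ inside $\Delta$), so the common shift $\balpha^{\sT}\bb$ cancels: $\tilde m_0-\tilde m_1=\bgamma^{\sT}(\hat\bmu_0-\hat\bmu_1)$. Since moreover $\tilde\sigma_k^2=\bgamma^{\sT}\hat\bSigma_k\bgamma$, and the case split $\sigma_0=\sigma_1$ versus $\sigma_0\neq\sigma_1$ is the same condition whether stated in $\balpha$ or in $\bgamma$, we obtain the identity $\hat{\tilde E}(\balpha)=\hat E(\bA^{\sT}\balpha)$ for every $\balpha\neq\bzero$.

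Finally I would use that $\balpha\mapsto\bA^{\sT}\balpha$ is a bijection of $\R^p\setminus\{\bzero\}$ (equivalently of $\mathbb{P}^{p-1}$). Because the two functions differ only by this reparametrization and $\bA^{\sT}$ is onto, they have the same range, and $\balpha$ minimizes $\hat{\tilde E}$ if and only if $\bA^{\sT}\balpha$ minimizes $\hat E$. By hypothesis the minimizer of $\hat E$ is unique up to a scalar, i.e.\ the set of minimizers is the line $\{c\,\hat\balpha_0:c\neq0\}$; hence $\bA^{\sT}\hat{\tilde\balpha}_0=c\,\hat\balpha_0$ for some $c\neq0$, which rearranges to $\hat{\tilde\balpha}_0=c(\bA^{\sT})^{-1}\hat\balpha_0$, the desired conclusion. (The accompanying remark that the projected data are related by the affine map $\tilde x^i_k=c\hat\balpha_0^{\sT}\bA^{-1}\bb+cx^i_k$, and hence Corollary~\ref{cor:1}, then follow by direct substitution.)

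I do not expect a genuine obstacle here; the argument is essentially bookkeeping. The only point requiring care is the verification, by inspecting both branches of~\eqref{eq:3} together with the definition of $\Delta$, that $E$ is truly a function of $(m_0-m_1,\sigma_0,\sigma_1)$ and not of $(m_0,m_1)$ separately — this is exactly what renders the translation component $\bb$ irrelevant — combined with noting that the equal-variance branch is triggered by the condition $\tilde\sigma_0=\tilde\sigma_1$, which is invariant under $\balpha\mapsto\bA^{\sT}\balpha$.
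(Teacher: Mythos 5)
Your argument is correct and follows essentially the same route as the paper: you rederive Proposition~\ref{prop:2} via the sample-statistic identities $\hat{\tilde\bmu}_k=\bb+\bA\hat\bmu_k$, $\hat{\tilde\bSigma}_k=\bA\hat\bSigma_k\bA^{\sT}$, the cancellation of the common shift $\balpha^{\sT}\bb$ giving $\hat{\tilde E}(\balpha)=\hat E(\bA^{\sT}\balpha)$, and uniqueness of the minimizer up to a scalar, then conclude that the projected data differ by an invertible one-dimensional affine map so the fitted 1D QDA predictions coincide. This matches the paper's proof, including its equally brief final step invoking the affine invariance of the 1D QDA rule.
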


Here is a remark on the Gaussian assumption before we move on to the
numerical studies. The formulation~\eqref{eq:3} of the classification
error of QDA with respect to direction $\balpha$ relies on the
Gaussian distribution. As a consequence, the definition of the optimal
projection, $\balpha_0$, depends on the Gaussian assumption. Without
the Gaussian assumption, the direction $\balpha_0$ is still defined as
the minimizer of~\eqref{eq:3}, although it might not the be the
optimal projection in the sense of minimizing expected classification
error. This is analogous to the story for LDA\@. Without the Gaussian
assumption, LDA still works and is consistent to its population
version, although the population version of LDA is not the Bayesian or
optimal rule any more. In our case, the main theoretical results,
i.e., consistency (Theorem~\ref{thr:2}) and invariance
(Proposition~\ref{prop:2}) still hold without the Gaussian assumption.

\section{Numerical Studies}
\label{sec:numerical-studies}
\subsection{Method for Comparison}
In this section, we compare our method, Algorithm~\ref{alg:1} (QDAP),
with LDA, DSDA~\citep{mai2012direct}, QDA,
DAP~\citep{gaynanova2019sparse}, and RDA~\citep{guo2007regularized} by
both simulated and real data examples. Besides the classical methods
LDA and QDA, RDA is a well known regularization approach which works
well for moderate and high dimensional data. DSDA and DAP are two
representatives of modern high dimensional classification tools. For
DSDA, DAP and RDA, we used the R packages provided by the authors with
default settings. For LDA and QDA, we used functions from R
recommended package `MASS'. In simulated data examples, the oracle
method that employs the true model for prediction is included for
comparison as a benchmark.

\subsection{Simulated data}
\label{sec:simulated-data}

We illustrate seven data generation settings as follows. In the first
five models, the data are generated from Gaussian distributions with
parameters specified below.

\begin{itemize}
\item Model 1: \(\bSigma_0 = \bSigma_1 = \bI_{p}\).
  \(\bmu_0 = \bm0_{p}\), \(\bmu_1 = \frac13\bm1_{p}\).
\item Model 2:
  \(\bSigma_0 = \bSigma_1 = \bB^{\sT} \bB + {\mathrm{diag}}(\bv)\),
  where \(\bB\) is a \(p \times p\) matrix with IID entries from
  \(\cN (0,1)\) distribution, and \(\bv\) is a \(p \times 1\) vector
  with IID entries from \(\cU(0,1)\) distribution.
  \(\bmu_0 = \bm0_{p}\), \(\bmu_1 = \bm1_{p}\).
\item Model 3: \(\bSigma_0 = \bI_{p}\), \(\bSigma_1 = (\sigma_{ij})\),
  where \(\sigma_{ii} = 3\) and \(\sigma_{ij} = 2\) for \(i \neq
  j\). \(\bmu_0 = \bm0_{p}\), \(\bmu_1 = \bm1_{p}\).
\item Model 4: Same settings as Model 3 except that
  \(\bmu_1 = \bm0_{p}\).
\item Model 5: Same settings as Model 3 except that
  \(\bSigma_0 = \mathrm{diag}(10, \bm1_{p - 1})\), and \(\bmu_1\) has
  IID entries from \(\cN (0,1/p)\) distribution.
\end{itemize}
In the next two models, the data are from multivariate
$t$-distributions with 3 degrees of freedom \(t_3(\bmu_k, \bSigma_k)\)
\citep{anderson2003introduction}.
\begin{itemize}
\item Model 6: Same \(\bSigma_k\)'s and \(\bmu_k\)'s as Model 2.
\item Model 7: Same \(\bSigma_k\)'s and \(\bmu_k\)'s as Model 5.
\end{itemize}

The number of features is set to \(p = 50\). In each model, sample
sizes are set to \(n = 200,300,400,500\) and 600 for training, with
\(n/2\) samples in each class. A test set with 500 observations in
each class is used for calculating classification errors. In
Tables~\ref{tab:simulation-1}-\ref{tab:simulation-7}, we report the
average classification errors (in percentage) with standard errors,
based on 100 replicates for each scenario. In models 2, 5, 6 and 7,
the model parameters are generated once, and all replicates are
independently generated from the same model.
\begin{table}
  \footnotesize \centering
  \begin{tabular}{rlllllll}
    \toprule
    $n$ & LDA & QDA & RDA & DSDA & DAP & QDAP & Oracle\\
    \midrule
    200 & 17.41 (0.18) & 35.79 (0.25) & 14.22 (0.16) & 17.59 (0.17) & 19.04 (0.17) & 17.46 (0.18) & 11.89 (0.10)\\
    300 & 15.37 (0.14) & 31.37 (0.20) & 13.41 (0.13) & 15.68 (0.14) & 16.74 (0.14) & 15.42 (0.14) & 11.93 (0.10)\\
    400 & 14.63 (0.13) & 28.79 (0.19) & 13.08 (0.10) & 15.00 (0.13) & 15.95 (0.15) & 14.65 (0.13) & 11.86 (0.11)\\
    500 & 14.07 (0.11) & 26.39 (0.18) & 12.84 (0.09) & 14.29 (0.12) & 15.04 (0.12) & 14.06 (0.11) & 11.72 (0.10)\\
    600 & 13.64 (0.12) & 24.63 (0.17) & 12.74 (0.11) & 13.92 (0.12) & 14.52 (0.13) & 13.67 (0.12) & 11.90 (0.11)\\
    \bottomrule
  \end{tabular}
  \caption{Average classification errors in percentage (with standard errors in parenthesis) for model 1.}
  \label{tab:simulation-1}
\end{table}

\begin{table}
  \footnotesize \centering
  \begin{tabular}{rlllllll}
    \toprule
    $n$ & LDA & QDA & RDA & DSDA & DAP & QDAP & Oracle\\
    \midrule
    200 & 9.11 (0.14) & 26.61 (0.27) & 9.49 (0.16) & 9.66 (0.17) & 26.52 (0.44) & 9.24 (0.14) & 5.31 (0.07)\\
    300 & 7.67 (0.10) & 20.22 (0.22) & 7.85 (0.10) & 8.04 (0.11) & 20.06 (0.35) & 7.69 (0.10) & 5.27 (0.06)\\
    400 & 6.98 (0.09) & 16.85 (0.17) & 7.18 (0.11) & 7.28 (0.10) & 17.30 (0.32) & 6.99 (0.09) & 5.30 (0.07)\\
    500 & 6.59 (0.08) & 14.91 (0.15) & 6.71 (0.07) & 6.80 (0.08) & 15.29 (0.26) & 6.57 (0.08) & 5.24 (0.06)\\
    600 & 6.30 (0.08) & 13.47 (0.14) & 6.41 (0.08) & 6.53 (0.08) & 13.72 (0.22) & 6.32 (0.08) & 5.32 (0.06)\\
    \bottomrule
  \end{tabular}
  \caption{Average classification errors in percentage (with standard errors in parenthesis) for model 2.}
  \label{tab:simulation-2}
\end{table}

\begin{table}
  \footnotesize \centering
  \begin{tabular}{rlllllll}
    \toprule
    $n$ & LDA & QDA & RDA & DSDA & DAP & QDAP & Oracle\\
    \midrule
    200 & 36.92 (0.26) & 28.38 (0.23) & 18.12 (0.15) & 24.66 (0.27) & 15.20 (0.18) & 17.16 (0.26) & 7.94 (0.08)\\
    300 & 35.02 (0.25) & 25.29 (0.20) & 18.23 (0.13) & 23.29 (0.24) & 12.34 (0.16) & 11.76 (0.16) & 8.10 (0.09)\\
    400 & 33.14 (0.28) & 23.62 (0.19) & 18.27 (0.13) & 22.70 (0.21) & 11.31 (0.15) & 10.41 (0.13) & 8.20 (0.10)\\
    500 & 31.44 (0.24) & 21.80 (0.14) & 18.12 (0.13) & 21.92 (0.15) & 10.65 (0.12) & 9.63 (0.09) & 8.21 (0.09)\\
    600 & 30.60 (0.24) & 20.43 (0.14) & 18.16 (0.13) & 21.41 (0.16) & 10.02 (0.10) & 9.18 (0.08) & 8.06 (0.07)\\
    \bottomrule
  \end{tabular}
  \caption{Average classification errors in percentage (with standard errors in parenthesis) for model 3.}
  \label{tab:simulation-3}
\end{table}

\begin{table}
  \footnotesize \centering
  \begin{tabular}{rlllllll}
    \toprule
    $n$ & LDA & QDA & RDA & DSDA & DAP & QDAP & Oracle\\
    \midrule
    200 & 49.88 (0.16) & 30.59 (0.22) & 46.44 (0.48) & 49.61 (0.17) & 25.02 (0.77) & 19.53 (0.26) & 10.10 (0.08)\\
    300 & 50.42 (0.15) & 27.56 (0.20) & 46.33 (0.44) & 49.80 (0.17) & 20.39 (0.80) & 13.93 (0.16) & 9.91 (0.08)\\
    400 & 50.17 (0.18) & 25.82 (0.16) & 46.23 (0.43) & 49.88 (0.18) & 18.45 (0.72) & 12.41 (0.12) & 9.93 (0.09)\\
    500 & 49.95 (0.17) & 24.23 (0.17) & 47.41 (0.36) & 49.67 (0.15) & 18.02 (0.84) & 11.71 (0.11) & 10.16 (0.10)\\
    600 & 50.09 (0.16) & 23.03 (0.13) & 47.73 (0.32) & 50.04 (0.15) & 19.90 (1.15) & 11.18 (0.10) & 9.96 (0.09)\\
    \bottomrule
  \end{tabular}
  \caption{Average classification errors in percentage (with standard errors in parenthesis) for model 4.}
  \label{tab:simulation-4}
\end{table}

\begin{table}
  \footnotesize \centering
  \begin{tabular}{rlllllll}
    \toprule
    $n$ & LDA & QDA & RDA & DSDA & DAP & QDAP & Oracle\\
    \midrule
    200 & 35.82 (0.22) & 22.36 (0.19) & 35.61 (0.27) & 35.98 (0.26) & 23.07 (0.62) & 19.72 (0.25) & 7.31 (0.08)\\
    300 & 34.69 (0.22) & 18.69 (0.15) & 34.44 (0.24) & 34.83 (0.25) & 20.33 (0.66) & 14.03 (0.17) & 7.29 (0.08)\\
    400 & 32.98 (0.17) & 16.82 (0.13) & 32.81 (0.19) & 33.21 (0.19) & 18.63 (0.65) & 12.53 (0.12) & 7.14 (0.08)\\
    500 & 32.65 (0.16) & 15.58 (0.13) & 32.39 (0.17) & 32.39 (0.16) & 19.06 (0.62) & 11.89 (0.11) & 7.42 (0.08)\\
    600 & 32.00 (0.13) & 14.69 (0.11) & 31.79 (0.15) & 32.07 (0.16) & 19.57 (0.70) & 11.37 (0.10) & 7.25 (0.07)\\
    \bottomrule
  \end{tabular}
  \caption{Average classification errors in percentage (with standard errors in parenthesis) for model 5.}
  \label{tab:simulation-5}
\end{table}

\begin{table}
  \footnotesize \centering
  \begin{tabular}{rlllllll}
    \toprule
    $n$ & LDA & QDA & RDA & DSDA & DAP & QDAP & Oracle\\
    \midrule
    200 & 8.69 (0.12) & 23.54 (0.26) & 9.07 (0.15) & 9.09 (0.15) & 24.06 (0.37) & 8.78 (0.12) & 5.22 (0.07)\\
    300 & 7.40 (0.10) & 19.23 (0.19) & 7.67 (0.11) & 7.72 (0.11) & 18.36 (0.29) & 7.45 (0.10) & 5.09 (0.06)\\
    400 & 6.86 (0.09) & 16.44 (0.14) & 7.03 (0.10) & 7.16 (0.11) & 14.81 (0.26) & 6.90 (0.09) & 5.06 (0.07)\\
    500 & 6.25 (0.08) & 14.78 (0.16) & 6.49 (0.08) & 6.46 (0.09) & 13.43 (0.24) & 6.28 (0.08) & 4.96 (0.07)\\
    600 & 6.23 (0.08) & 13.57 (0.14) & 6.37 (0.09) & 6.37 (0.09) & 11.68 (0.19) & 6.23 (0.08) & 5.13 (0.06)\\
    \bottomrule
  \end{tabular}
  \caption{Average classification errors in percentage (with standard errors in parenthesis) for model 6.}
  \label{tab:simulation-6}
\end{table}

\begin{table}
  \footnotesize \centering
  \begin{tabular}{rlllllll}
    \toprule
    $n$ & LDA & QDA & RDA & DSDA & DAP & QDAP & Oracle\\
    \midrule
    200 & 32.36 (0.20) & 22.57 (0.24) & 31.60 (0.25) & 32.08 (0.24) & 23.72 (0.55) & 22.54 (0.26) & 6.27 (0.09)\\
    300 & 30.48 (0.18) & 19.62 (0.20) & 29.79 (0.18) & 30.19 (0.19) & 22.05 (0.48) & 17.72 (0.16) & 6.28 (0.08)\\
    400 & 29.57 (0.17) & 17.74 (0.20) & 28.97 (0.19) & 29.50 (0.18) & 21.30 (0.45) & 16.08 (0.14) & 6.36 (0.07)\\
    500 & 28.52 (0.16) & 16.48 (0.18) & 27.97 (0.15) & 28.19 (0.16) & 21.10 (0.44) & 15.43 (0.13) & 6.22 (0.08)\\
    600 & 28.08 (0.14) & 16.03 (0.21) & 27.52 (0.14) & 27.99 (0.14) & 20.87 (0.45) & 14.77 (0.11) & 6.18 (0.07)\\
    \bottomrule
  \end{tabular}
  \caption{Average classification errors in percentage (with standard errors in parenthesis) for model 7.}
  \label{tab:simulation-7}
\end{table}

For models 1 and 2, the LDA assumption of equal covariance matrices is
satisfied. LDA performs well, and our method performs similarly to
LDA\@. RDA performs better than LDA for model 1, due to the diagonal
covariance structure. For models 3 and 4, the data are
heteroscedastic, and there is only one useful direction for
classification. As a result, our method (QDAP) performs the best. The
LDA-based methods performs much worse due to the unequal covariance
structure. The standard QDA suffers from small sample sizes. DAP
method performs reasonably well and ranks in the second place. Model 5
represents a more general heteroscedastic setting. In this case, our
method is suboptimal to QDA if the sample size is big enough. However,
our method could outperform QDA when the sample size is moderate, due
to the bias-variance trade-off. As a result, our method performs best
in Table~\ref{tab:simulation-5} for all sample sizes in the given
range. To demonstrate the robustness of our algorithm for non-Gaussian
data, we consider models 6 and 7, which are similar to models 2 and 5
except that multivariate $t$ distributions with 3 degrees of freedom
are employed. In model 6, LDA performs the best, while our method
performs similarly to LDA in terms of both classification error and
its standard error. For model 7, three QDA-based methods are better
than LDA-based methods and our method achieves the best accuracy. We
conclude from these two examples that our method is similar to LDA and
other methods in terms of robustness to heavy-tailed data.

\subsection{Real Data}
\label{sec:real-data}

In this subsection, five real data sets are used to compare these
classification methods. In each real data experiment, we randomly
assigned 60\% of the observations into the training set and the rest
into the test set. We randomly split each real data set 300 times, and
calculated average classification error along with its standard
error.

\subsubsection{Breast Cancer Wisconsin Data Set}
\label{sec:breast-canc-wisc}

The breast cancer data set, created by Dr.\ WIlliam
H. Wolberg~\citep{wolberg1990multisurface}, is available on the UCI
Machine Learning Repository~\citep{Dua:2019}. There are $n=699$
instances of patients from Dr.\ Wolberg's clinical cases. 10 features
are recorded for each patient, $p=9$ of which are the explanatory
variables. The 10th feature assigns the patients into two classes ---
``benign'' and ``malignant''.

\subsubsection{Ultrasonic Flowmeter Diagnostics Data Set}
\label{sec:ultr-flowm-diagn}

This data set, provided by~\citet{gyamfi2018linear}, is available on
the UCI Machine Learning Repository~\citep{Dua:2019}. The goal of this
data set is to predict the health status of some flowmeters installed
at UK using diagnostic data. There are $n=87$ instances of diagnosed
flowmeters and the diagnostic data comes in $p=36$ dimensions. Two
classes are either ``Healthy'' or ``Installation effects''.

\subsubsection{Heart Disease Data Set}
\label{sec:heart-disease-data}

This data set, provided by Andras Janosi, William Steinbrunn, Matthias
Pfisterer and Robert Detrano, is available on the UCI Machine Learning
Repository~\citep{Dua:2019}. There are \(n=303\) patients in
total. $p=13$ different attributes are used to predict the patients'
angiographic disease status, which could be either 0 (\(<\) 50\%
diameter narrowing) or 1 (\(>\) 50\% diameter narrowing).

\subsubsection{Image Segmentation Data}
\label{sec:image-segm-data}

This data set, created by Vision Group, University of Massachusetts,
is available on the UCI Machine Learning
Repository~\citep{Dua:2019}. There are 2310 total images in 7
different classes, with \(330\) images each. To make this a binary
classification problem, we only include class 1 (brickface) and 4
(cement) for analysis. There are 19 features in total. Features 1, 3,
4, 5 are almost constants within the chosen classes, so they were
removed from the data, leaving $p=15$ features for classification.

\subsubsection{Satellite Data Set}
\label{sec:satellite-data-set}
This data set, provided by Ashwin Srinivasan, is available on the UCI
Machine Learning Repository~\citep{Dua:2019}. Satellite images are
labeled into 9 classes. Only class 1 (red soil) and class 3 (grey
soil) are considered for our analysis, where there are 1072 images in
class 1 and 961 images in class 3. $p=36$ attributes (9 pixels times 4
spectral bands) are used for classification.

\subsubsection{Results}
\label{sec:real-data-results}
Average classification errors (in percentage) for these experiments
are summarized in Table~\ref{tab:real-data-analysis}. LDA performs
reasonably well for all data sets, but our method outplays LDA with a
margin, especially in the first two data sets. To better understand
the result, we performed classical Box's M test \citep{box1949general}
and a modern high dimensional two-sample covariance test proposed by
\citet{cai2013two}. All the \(p\) values for the 5 data sets are below
\(2.68\times 10^{-8}\), indicating strong evidence of
heteroscedasticity. Nevertheless, the original QDA suffers from low
sample sizes, and in particular, fails to work in data sets 2 and
4. As a QDA based method, our method is more versatile and gives
better classification results. It outperforms both LDA and QDA\@. RDA
performs well except in data set 2. DSDA and DAP, as representatives
of sparse methods for high dimensional data, produce slightly worse
results than LDA and our method. Overall, our method performs the best
among the algorithms in comparison.

\begin{table}
  \footnotesize \centering
  \begin{tabular}{rllllll}
    \toprule
    & LDA & QDA & RDA & DSDA & DAP & QDAP\\
    \midrule
    Data set 1 & 4.62 (0.06) & 5.02 (0.07) & 4.23 (0.06) & 4.87 (0.06) & 4.24 (0.06) & 3.30 (0.04)\\
    Data set 2 & 1.58 (0.11) & NA & 34.05 (0.38) & 2.94 (0.26) & 15.52 (0.41) & 0.89 (0.08)\\
    Data set 3 & 17.81 (0.17) & 20.86 (0.18) & 17.56 (0.18) & 18.00 (0.17) & 18.43 (0.19) & 17.48 (0.17)\\
    Data set 4 & 0.72 (0.02) & NA & 0.78 (0.03) & 0.84 (0.03) & 1.64 (0.04) & 0.69 (0.02)\\
    Data set 5 & 1.37 (0.02) & 1.79 (0.03) & 1.38 (0.02) & 1.39 (0.02) & 1.54 (0.02) & 1.32 (0.02)\\
    \bottomrule
  \end{tabular}
  \caption[Real data analysis summary]{Average classification errors in percentage (with standard errors in parenthesis) for different classification methods. Data set 1: Breast cancer Wisconsin data set. Data set 2: Ultrasonic flowmeter diagnostics data set. Data set 3: Heart disease data set. Data set 4: Image segmentation data set. Data set 5: Satellite data set.}
  \label{tab:real-data-analysis}
\end{table}

\section{Discussion}\label{sec:discussion}
In this work, we propose a new dimension reduction and classification
method based on QDA\@. The empirical studies show that our algorithm
performs well for data sets with moderate dimensions and unequal
covariance structures. An R package \texttt{QDAP} implementing our
algorithm is available on
\url{https://github.com/ywwry66/QDA-by-Projection-R-Package}. Note
that we assume equal prior probability in this paper for easy
presentation, without which all theoretical results still hold with
minor modifications. Moreover, the implementation in our R package
does not rely on this assumption.

We discuss here briefly a few related works in the literature. In
particular, \cite{gaynanova2019sparse} proposes a quadratic
classification rule via linear dimension reduction called DAP, which
works for high dimensional classification with unequal
covariances. Roughly speaking, DAP estimates simultaneously two
directions \(\bpsi_0 = \bSigma_0^{-1}\bdelta\) and
\(\bpsi_1 = \bSigma_1^{-1}\bdelta\) where
\(\bdelta = \bmu_0 - \bmu_1\), and then employs QDA for classification
after projecting the data to these two directions. Empirically, a
sparse method is used for estimating \(\bpsi_0\) and \(\bpsi_1\). In
the population level, the space spanned by \(\bpsi_0\) and \(\bpsi_1\)
can be very different from or even orthogonal to our 1D optimal
subspace spanned by \(\balpha_0 = \arg \min_{\balpha} E(\balpha)\). In
short, DAP does not aim to find such an optimal projection. An
advantage of DAP is that it conducts variable selection and works for
high dimensional data. It is an interesting research direction to
extend our method in a sparse high dimensional setting. Some recent
works~\citep{cannings2017random,tian2021rase} propose to ensemble
classifiers on random subspaces. Instead of searching for an optimal
projection, these works employ and combine a collection of classifiers
on subspaces, which may perform better when a single optimal
projection does not exist. In practice, an asymptotic expansion of the
classification error would be helpful to decide sample sizes for
training~\citep{kharin2013robustness}. It is an interesting research
direction to study such an expansion for our method. Last but not
least, it is momentous to study classification with dependent
observations, for example, time series
data~\citep{krafty2016discriminant}, spatially correlated
data~\citep{li2020high}, and clipping of random
field~\citep{de2000bayesian}.

\section*{Acknowledgement} The authors are grateful to the Associate
Editor and two referees for helpful comments. This work was supported
by the National Science Foundation Grant DMS-1722691 and CCF-1740858;
and Simons Foundation Grant 524432.

\renewcommand{\baselinestretch}{1} 
%\newpage
\appendix
\section{Appendix}
\label{sec:appendix}

\subsection{Proof of Theorem~\ref{thr:1}}
\label{sec:proof-theorem-1}
We prove Theorem~\ref{thr:1} in this appendix. Let \(\psi_{\balpha}\)
be the 1D Bayesian rule for \((\balpha^{\mathsf{T}} \bX,
Y)\). Clearly,
\(\balpha^{\mathsf{T}} \bX|\{Y=k\} \sim
\mathcal{N}(m_k,\sigma_k^2)\). We prove by 2 cases:
\begin{enumerate}
\item \(\sigma_0 \neq \sigma_1\). Without loss of generality, we may
  assume \(\sigma_0 > \sigma_1\). In this case,
  \[
    \psi_{\balpha}(x) = 1_{\{x\colon q(x) > 0\}}(x) = 1_{(r_1, r_2)}(x), 
  \]
  where
  \[
    q(x) = \parens*{\frac1{\sigma_0^2}-\frac1{\sigma_1^2}}x^2 -
    2\parens*{\frac{m_0}{\sigma_0^2}-\frac{m_1}{\sigma_1^2}}x +
    \parens*{\frac{m_0^2}{\sigma_0^2}-\frac{m_1^2}{\sigma_1^2}} +
    \log\parens*{\frac{\sigma_0^2}{\sigma_1^2}} > 0
  \] is the 1D version of QDA rule~\eqref{qdarule}, and
  \(r_1,r_2=(\parens*{m_1\sigma_0^2-m_0\sigma_1^2}\pm
  \sigma_0\sigma_1\sqrt{\Delta})/(\sigma_0^2-\sigma_1^2)\) with
  \(\Delta=(m_0-m_1)^2+(\sigma_0^2-\sigma_1^2)\log(\sigma_0^2/\sigma_1^2)\)
  are the roots of \(q(x)\).

  The classification error is calculated as follows.
  \begin{align*}
    E(\balpha)=&\frac12\rP(\psi_{\balpha}(\balpha^{\mathsf T}\bX)=1|Y=0)+\frac12\rP(\psi_{\balpha}(\balpha^{\mathsf T}\bX)=0|Y=1)\\
    =&\frac12\rP(r_1<\balpha^{\mathsf T}\bX<r_2|Y=0)+\frac12\rP(\balpha^{\mathsf T}\bX<r_1\textrm{ or }\balpha^{\mathsf T}\bX>r_2|Y=1)\\
    =&\frac12\rP\parens*{\frac{r_1-m_0}{\sigma_0}<\frac{\balpha^{\mathsf T}\bX-m_0}{\sigma_0}<\frac{r_2-m_0}{\sigma_0}\bigg|Y=0}\\
               &+\frac12\rP\parens*{\frac{\balpha^{\mathsf T}\bX-m_1}{\sigma_1}<\frac{r_1-m_1}{\sigma_1}\textrm{ or }\frac{\balpha^{\mathsf T}\bX-m_1}{\sigma_1}>\frac{r_2-m_1}{\sigma_1}\bigg|Y=1}\\
    =&\frac12\Phi\parens*{\frac{r_2-m_0}{\sigma_0}}-\frac12\Phi\parens*{\frac{r_1-m_0}{\sigma_0}}+\frac12\Phi\parens*{\frac{r_1-m_1}{\sigma_1}}+\frac12\parens*{1-\Phi\parens*{\frac{r_2-m_1}{\sigma_1}}}.
  \end{align*}
  This is exactly the expression of \(E(\balpha)\) in
  Theorem~\ref{thr:1} when \(\sigma_0\neq\sigma_1\).
\item \(\sigma_0=\sigma_1=\sigma\). In this case \(\psi_{\balpha}\)
  reduces to the 1D LDA rule. Assuming \(m_0>m_1\),
  \(\psi_{\balpha}(x) = 1_{(-\infty, (m_0+m_1)/2)}(x)\). So
  \begin{align*}
    E(\balpha)=&\frac12\rP(\psi_{\balpha}(\balpha^{\mathsf T}\bX)=1|Y=0)+\frac12\rP(\psi_{\balpha}(\balpha^{\mathsf T}\bX)=0|Y=1)\\
    =&\frac12\rP\parens*{\balpha^{\mathsf T}\bX<\frac{m_0+m_1}2\bigg|Y=0}+\frac12\rP\parens*{\balpha^{\mathsf T}\bX>\frac{m_0+m_1}2\bigg|Y=1}\\
    =&\frac12\rP\parens*{\frac{\balpha^{\mathsf T}\bX-m_0}{\sigma}<\frac{m_1-m_0}{2\sigma}\bigg|Y=0}+\frac12\rP\parens*{\frac{\balpha^{\mathsf T}\bX-m_1}{\sigma}>\frac{m_0-m_1}{2\sigma}\bigg|Y=1}\\
    =&\frac12\Phi\parens*{\frac{m_1-m_0}{2\sigma}}+\frac12\parens*{1-\Phi\parens*{\frac{m_0-m_1}{2\sigma}}}\\
    =&\Phi\parens*{-\frac{|m_1-m_0|}{2\sigma}}.
  \end{align*}
  Similarly, we can show the same formula for \(m_0<m_1\). When
  \(m_0=m_1\), LDA becomes random guess so \(E(\balpha)=1/2\), which
  is again the same as function value \(\Phi(0)\).
\end{enumerate}

\subsection{Continuity and Analyticity of Classification Error Function}
\label{con-dif-err}
We present a few properties of the classification error function
$E(\balpha)$ which are helpful in the proof of Theorem~\ref{thr:2}.

Assuming \(r(\balpha) = (m_0 - m_1)/\sigma_1\) and
\(g(\balpha) = \sigma_0/\sigma_1\), we can rewrite the classification
error \(E(\balpha)\) as the composition of
\(\cE \colon \R \times \R_{>0} \to \R\) and
\((r(\balpha), g(\balpha))\), where
\begin{equation}
  \label{eq:7}
  \cE(r, g) =
  \begin{dcases}
    \Phi \parens*{-\frac{|r|}{2}} & g = 1\\
    \begin{multlined}[0.6\linewidth]
      \frac12 + \frac{1}{2} \Phi \parens*{{\frac{r - g
            \sqrt{\Delta}}{g^2 - 1}}} -\frac{1}{2} \Phi
      \parens*{{\frac{r +
            g \sqrt{\Delta}}{g^2 - 1}}}\\
      +\frac{1}{2} \Phi \parens*{{\frac{rg + \sqrt{\Delta}}{g^2 - 1}}}
      -\frac{1}{2} \Phi \parens*{{\frac{rg - \sqrt{\Delta}}{g^2 - 1}}}
    \end{multlined}
    & g \neq 1, g > 0
  \end{dcases}
\end{equation}
\(\Delta = r^2 + (g^2 - 1)\log (g^2)\).
\begin{proposition}
  \label{prop:3}
  The following properties hold for \(\cE\):
  \begin{enumerate}
  \item \(\forall (r, g)\in \R \times \R_{>0}\),
    \(\cE(r, g)\in (0, 1/2]\), % chktex 9
  \item \(\cE\) is continuous,
  \item \(\cE\) is analytic on
    \(\R \times (\R_{ > 0} \backslash \{1\})\).
  \end{enumerate}
\end{proposition}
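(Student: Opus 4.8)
The plan is to settle the three claims in turn, using throughout the observation that $\cE(r,g)$ is a Bayes classification error in disguise: given $(r,g)\in\R\times\R_{>0}$, take the one–dimensional Gaussian model $\bmu_0=r$, $\bmu_1=0$, $\bSigma_0=g^2$, $\bSigma_1=1$ and direction $\balpha=1$; then $r=(m_0-m_1)/\sigma_1$, $g=\sigma_0/\sigma_1$, and by Theorem~\ref{thr:1} (after the reparametrization that produces~\eqref{eq:7}) we have $\cE(r,g)=E(\balpha)$, the error of the Bayes–optimal 1D QDA rule for that model.

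For claim (1): the bound $\cE(r,g)\le 1/2$ holds because the Bayes rule cannot be beaten by the trivial rule ``always predict $1$'', whose error is $\pi_0=1/2$; equality occurs at $(r,g)=(0,1)$. For $\cE(r,g)>0$, note that the QDA decision region for class $1$ is a nonempty open set in all nondegenerate cases --- an open interval when $\sigma_0\ne\sigma_1$ (there $\Delta>0$, so there are two distinct roots), a half-line when $\sigma_0=\sigma_1$ and $m_0\ne m_1$ --- and so is its complement; since the Gaussian densities are everywhere positive, both conditional misclassification probabilities are then strictly positive. The remaining degenerate case $\sigma_0=\sigma_1$, $m_0=m_1$ gives $\cE=1/2$ directly.

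For claim (3), which I would dispatch before (2): on $\R\times(\R_{>0}\setminus\{1\})$ the factors $g^2-1$ and $\log(g^2)$ are nonzero and of the same sign, so $\Delta=r^2+(g^2-1)\log(g^2)>0$ there; hence $\sqrt{\Delta}$, the reciprocal $1/(g^2-1)$, and therefore each of the four arguments of $\Phi$ appearing in~\eqref{eq:7}, are real-analytic on this set. Since $\Phi$ is real-analytic on $\R$ (it is an antiderivative of the entire Gaussian density) and real-analyticity is closed under composition and finite sums, $\cE$ is real-analytic on $\R\times(\R_{>0}\setminus\{1\})$.

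For claim (2): analyticity already yields continuity off the line $g=1$, so only the points $(r_0,1)$ remain, and this is the one genuinely delicate step --- the main obstacle. The clean route is the identity $\cE(r,g)=\frac12\bigl(1-d_{TV}(\cN(r,g^2),\cN(0,1))\bigr)$, which holds because for equal priors the Bayes error equals $\frac12\int\min(p_0,p_1)=\frac12(1-d_{TV})$. As $\cN(0,1)$ is fixed, continuity of $\cE$ at $(r_0,1)$ reduces, via the triangle inequality for $d_{TV}$, to $d_{TV}(\cN(r,g^2),\cN(r_0,1))\to 0$ as $(r,g)\to(r_0,1)$, which is immediate from Scheff\'e's lemma (the Gaussian densities converge pointwise and all integrate to one). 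I would explicitly flag that a direct attack on~\eqref{eq:7} at $(0,1)$ is what one should avoid: two of the four arguments of $\Phi$ diverge while two stay bounded, and their limiting values depend on the relative rates of $r\to 0$ and $g\to 1$; the total-variation representation sidesteps this entirely. As a consistency check (not needed for the proof), for $r\ne 0$ fixed the direct limit as $g\to 1$ does go through: the two bounded arguments tend to finite values, the divergent pair cancels, and one recovers $\Phi(-|r|/2)=\cE(r,1)$.
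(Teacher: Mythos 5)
Your proof is correct, but it takes a genuinely different route from the paper's for parts (1) and (2); part (3) coincides with the paper's argument (open set, $\Delta>0$ there, composition of analytic functions — you usefully make the positivity of $\Delta$ explicit, which the paper leaves implicit). For (1), the paper works directly with the formula~\eqref{eq:7}: it groups the four $\Phi$-terms in two ways, uses $\sqrt{\Delta}>|r|$ to get $\cE<1/2$ for $g\neq 1$, and a separate sign argument for $\cE>0$; you instead invoke the interpretation of $\cE(r,g)$ as the Bayes error of the two-class problem $\cN(r,g^2)$ vs.\ $\cN(0,1)$ with equal priors, so that $\cE\le 1/2$ is immediate (comparison with the trivial rule) and $\cE>0$ follows because both decision regions carry positive probability under the opposite class. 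This is legitimate, since Theorem~\ref{thr:1} together with the reparametrization preceding~\eqref{eq:7} identifies $\cE(r,g)$ with that Bayes error for every $(r,g)$, including $g=1$ (note $\cE(r,g)=\cE(-r,g)$, so the sign convention in $r=(m_0-m_1)/\sigma_1$ is immaterial). For (2), the paper performs a direct limit analysis at points $(\rho,1)$, split into the cases $\rho=0$ (Lipschitz bound on $\Phi$ plus $|c_1-c_2|,|d_1-d_2|\to 0$) and $\rho\neq 0$ (explicit limits of the bounded arguments and cancellation of the divergent pair); your total-variation identity $\cE=\tfrac12(1-d_{TV})$ combined with the triangle inequality and Scheff\'e's lemma replaces all of this case analysis at once and in fact gives joint continuity on all of $\R\times\R_{>0}$ directly. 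What the paper's route buys is a self-contained, purely formula-based argument that also yields the strict bound $\cE<1/2$ off the line $g=1$ (slightly more than the proposition states); what your route buys is brevity and conceptual clarity, at the cost of importing the Bayes-error/total-variation identity and Scheff\'e's lemma (applied along sequences, which suffices for continuity in a metric space) and of leaning on the identification of~\eqref{eq:7} with a classification error rather than treating it as a bare analytic expression.
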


\begin{proof}
  \begin{enumerate}
  \item We prove this by two cases:
    \begin{enumerate}
    \item If \(g = 1\), since \(0 < \Phi(-|r|/2) \leq \Phi(0) = 1/2\),
      \(\cE(r, g) = \Phi(-|r|/2) \in (0, 1/2]\). % chktex 9
    \item If \(g\neq 1\), we can rewrite \(\cE\) as
      \begin{align*}
        \cE(r, g) = &\frac12 + \frac12(\Phi(c_1)-\Phi(c_2))+\frac12(\Phi(d_1)-\Phi(d_2))\\
                    & \frac12 + \frac12(\Phi(c_1)-\Phi(d_2))+\frac12(\Phi(d_1)-\Phi(c_2)), 
      \end{align*}
      where \(c_1=(r-g\sqrt{\Delta})/(g^2 - 1)\),
      \(c_2=(rg-\sqrt{\Delta})/(g^2 - 1)\),
      \(d_1=(rg+\sqrt{\Delta})/(g^2 - 1)\),
      \(d_2=(r+g\sqrt{\Delta})/(g^2 - 1)\).

      Since
      \(\sqrt{\Delta} = \sqrt{r^2 + (g^2 - 1)\log (g^2)} > \sqrt{r^2}
      = |r|\), we have
      \(c_1 - c_2 = -(\sqrt{\Delta} + r)/(g + 1) < 0\),
      \(d_1 - d_2 = -(\sqrt{\Delta} - r)/(g + 1) < 0\), which implies
      \(\Phi(c_1)-\Phi(c_2) < 0\) and \(\Phi(d_1)-\Phi(d_2) <
      0\). Thus, \(\cE(r, g) < 1/2\).

      To prove \(\cE(r, g) > 0\), we investigate separately for
      \(0 < g < 1\) and \(g > 1\). When \(0 < g < 1\),
      \(c_1 - d_2 = -2g\sqrt{\Delta}/(g^2 - 1) > 0\), so
      \(\Phi(c_1)-\Phi(d_2) > 0\), and
      \(\cE(r, g) > 1/2 + (1/2)0 + (1/2)(0 - 1) = 0\). When \(g > 1\),
      we can prove \(\Phi(d_1)-\Phi(c_2) > 0\) and get
      \(\cE(r, g) > 0\) as well.

      Combining these two inequalities, we have
      \(\cE(r, g)\in(0, 1/2)\).
    \end{enumerate}
  \item Let \(U=\R \times (\R_{ > 0} \backslash \{1\})\), then
    \(U^c=\mathbb{R} \times \{1\}\). \(\cE\) restricted on \(U\) is
    continuous because it is a composition of continuous
    functions. Similarly, \(\cE\) restricted on \(U^c\) is also
    continuous. Since \(U\) is an open subset of
    \(\R \times \R_{>0}\), \(\cE\) is continuous at every point of
    \(U\). Thus, we only need to prove \(\cE\) is continuous at every
    point of \(U^c\).

    For any \((\rho, 1)\in U^c\), it suffices to show
    \(\lim_{U \ni (r, g) \to (\rho, 1)}\cE(r, g)=\cE(\rho, 1)\). There
    are three cases:
    \begin{enumerate}
    \item If \(\rho = 0\), then for any
      \((r, g)\in U\)
      \begin{align*}
        \left|\cE(r, g)-\frac12\right|=&\left|\frac12(\Phi(c_1)-\Phi(c_2))+\frac12(\Phi(d_1)-\Phi(d_2))\right|\\
        \leq &\left|\frac12(\Phi(c_1)-\Phi(c_2))\right|+\left|\frac12(\Phi(d_1)-\Phi(d_2))\right|\\
        \leq & \frac{L}2|c_1-c_2|+\frac{L}2|d_1-d_2|\\
      \end{align*}
      The last inequality holds because \(\Phi\) is Lipschitz
      continuous. Since \(|c_1-c_2|=|\sqrt{\Delta} + r|/(g + 1)\to 0\)
      and \(|d_1-d_2|=|\sqrt{\Delta} - r|/(g + 1)\to 0\) as
      \((r, g) \to (0, 1)\) in \(U\), we have
      \(\lim_{U \ni (r, g) \to (0, 1)}\cE(r, g)=1/2 = \cE(0, 1)\).
    \item If \(\rho > 0\), as \((r, g) \to (\rho, 1)\) in \(U\),
      \[
        |d_1 - d_2| = \frac{\left|\sqrt{\Delta} - r\right|}{g + 1} =
        \frac{(g^2 - 1) \log(g^2)}{(g + 1)\left|\sqrt{\Delta} +
            r\right|} \to 0, 
      \]
      so
      \(\lim_{U \ni (r, g) \to (\rho, 1)}|\Phi(d_1) - \Phi(d_2)| \to
      0\) by Lipschitz continuity of \(\Phi\).

      For any \((r, g) \in U\),
      \begin{align*}
        \Phi(c_1) = &\Phi\parens*{\frac{r - g\sqrt{\Delta}}{g^2 - 1}}\\
        = &\Phi\parens*{\frac{r - rg}{g^2 - 1} + \frac{rg - g\sqrt{\Delta}}{g^2 - 1}}\\
        = &\Phi\parens*{-\frac{r}{g + 1} - g \frac{\Delta - r^2}{(g^2 - 1)\parens*{\sqrt{\Delta} + r}}}\\
        = &\Phi\parens*{-\frac{r}{g + 1} - g \frac{\log(g^2)}{\sqrt{\Delta} + r}} \to \Phi\parens*{-\frac{|\rho|}{2}}
      \end{align*}
      when \((r, g) \to (\rho, 1)\). Similar arguments yield
      \(\lim_{U \ni (r, g) \to (\rho, 1)} \Phi(c_2) =
      \Phi(|\rho|/2)\). As a result,
      \begin{align*}
        \lim_{U \ni (r, g) \to (\rho, 1)}\cE(r, g) = &\lim_{U \ni (r, g) \to (\rho, 1)} \frac12(\Phi(c_1)-\Phi(c_2))+\frac12(\Phi(d_1)-\Phi(d_2))+\frac12\\
        = &\frac{1}{2}\parens*{\Phi\parens*{-\frac{|\rho|}{2}} - \Phi\parens*{\frac{|\rho|}{2}}} + \frac{1}{2}\\
        = &\Phi\parens*{-\frac{|\rho|}{2}}\\
        = &\cE(\rho, 1)
      \end{align*}
    \item For \(\rho < 0\), by a similar argument to the last case, we
      have
      \(\lim_{U \ni (r, g) \to (\rho, 1)}\cE(r, g) = \cE(\rho, 1)\).
    \end{enumerate}
  \item Clearly, \(\R \times (\R_{ > 0} \backslash \{1\})\) is an open
    subset of \(\R \times \R_{ > 0}\). \(\cE\) is analytic on
    \(\R \times (\R_{ > 0} \backslash \{1\})\) because it is a
    composition of analytic functions.
  \end{enumerate}
\end{proof}

The properties of \(\cE\) have direct implications on the properties
of \(E\). The next corollary presents a few of them.
\begin{corollary}
  \label{cor:2}
  The following results hold for
  \(E \colon \R^p \backslash \{0\} \to \R\):
  \begin{enumerate}
  \item \(\forall \balpha \neq 0, E(\balpha) \in (0,1/2]\), % chktex 9
  \item E is continuous,
  \item \(E\) is analytic Lebesgue a.e.
  \end{enumerate}
\end{corollary}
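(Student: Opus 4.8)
\textbf{Proof proposal for Corollary~\ref{cor:2}.}

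The plan is to push each of the three properties from $\cE$ (Proposition~\ref{prop:3}) through the factorization $E(\balpha)=\cE\bigl(r(\balpha),g(\balpha)\bigr)$ recorded just above Proposition~\ref{prop:3}, where $r(\balpha)=(m_0-m_1)/\sigma_1$, $g(\balpha)=\sigma_0/\sigma_1$, $m_k=\balpha^\sT\bmu_k$, $\sigma_k=\sqrt{\balpha^\sT\bSigma_k\balpha}$. So the first step is to verify that the inner map $\balpha\mapsto(r(\balpha),g(\balpha))$ is well defined and regular on $\R^p\setminus\{0\}$. Since $\bSigma_0,\bSigma_1$ are positive definite, $\balpha^\sT\bSigma_k\balpha>0$ for $\balpha\neq0$; hence $\sigma_k$ is finite and strictly positive, $g(\balpha)\in\R_{>0}$ and $r(\balpha)\in\R$. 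Because $t\mapsto\sqrt t$ is real-analytic on $(0,\infty)$ while $\balpha\mapsto\balpha^\sT\bSigma_k\balpha$ and $\balpha\mapsto\balpha^\sT\bmu_k$ are polynomials, the maps $\sigma_k,m_k$, and therefore $r$ and $g$, are real-analytic (in particular continuous) on all of $\R^p\setminus\{0\}$.

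With this in hand parts~(1) and~(2) are immediate: $E(\balpha)=\cE(r(\balpha),g(\balpha))\in(0,1/2]$ by Proposition~\ref{prop:3}(1), and $E$ is continuous as a composition of the continuous inner map with the continuous function $\cE$ of Proposition~\ref{prop:3}(2). For part~(3) I would split into two cases. If $\bSigma_0\neq\bSigma_1$, the set $\{\balpha\neq0:g(\balpha)=1\}=\{\balpha\neq0:\balpha^\sT(\bSigma_0-\bSigma_1)\balpha=0\}$ is contained in the zero set of a not-identically-zero polynomial (a symmetric matrix is determined by its quadratic form), hence is Lebesgue-null; on its open complement the inner map takes values in $\R\times(\R_{>0}\setminus\{1\})$, where $\cE$ is analytic by Proposition~\ref{prop:3}(3), so $E$ is analytic there as a composition of real-analytic maps. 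If $\bSigma_0=\bSigma_1=:\bSigma$, then $g\equiv1$ and $E(\balpha)=\Phi\bigl(-|\balpha^\sT(\bmu_0-\bmu_1)|/(2\sqrt{\balpha^\sT\bSigma\balpha})\bigr)$: this is the constant $1/2$ when $\bmu_0=\bmu_1$, and otherwise it is analytic off the measure-zero hyperplane $\{\balpha:\balpha^\sT(\bmu_0-\bmu_1)=0\}$, since on each complementary half-space $|\balpha^\sT(\bmu_0-\bmu_1)|$ equals $\pm\,\balpha^\sT(\bmu_0-\bmu_1)$ and both $\Phi$ and $\sqrt{\balpha^\sT\bSigma\balpha}>0$ are analytic. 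In either case $E$ is real-analytic off a null set, which is part~(3).

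The only point beyond routine bookkeeping is the measure-zero assertion, which I would justify by observing that the relevant bad set is in each case contained in the zero set of a not-identically-zero polynomial in $\balpha$ (namely $\balpha^\sT(\bSigma_0-\bSigma_1)\balpha$ in the heteroscedastic case, $\balpha^\sT(\bmu_0-\bmu_1)$ in the homoscedastic one), and such a zero set has Lebesgue measure zero. Correctly identifying this bad set — and noticing that it becomes full-measure precisely in the degenerate homoscedastic subcase, which is what forces the case split — is the step I would take the most care with; everything else reduces to Proposition~\ref{prop:3} together with the stability of continuity and of real-analyticity under composition.
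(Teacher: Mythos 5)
Your proposal is correct and follows the route the paper intends: the paper states Corollary~\ref{cor:2} without a written proof, asserting it as a direct consequence of Proposition~\ref{prop:3} via the factorization \(E(\balpha)=\cE(r(\balpha),g(\balpha))\), which is exactly what you carry out. Your only addition is the explicit case split for part~(3) — the null set \(\{g(\balpha)=1\}\) when \(\bSigma_0\neq\bSigma_1\), versus the direct analyticity argument for \(\Phi\bigl(-|\balpha^{\sT}(\bmu_0-\bmu_1)|/(2\sqrt{\balpha^{\sT}\bSigma\balpha})\bigr)\) off a hyperplane when \(\bSigma_0=\bSigma_1\) — which correctly fills in a detail the paper leaves implicit.
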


One important property of \(E\) is homogeneity of degree 0, i.e.\
\(E(c\balpha) = E(\balpha)\) for any \(c \neq 0\), which is easy to
see by definition~\eqref{eq:3}. This allows us to characterize \(E\)
with function \(E'\colon \mathbb{P}^{p - 1}\to\R\) through the
factorization \(E = E' \circ Q\), where \(E'\) is defined as
\[
  E'([\balpha]) = E(\balpha), 
\]
and \(Q\colon\R^p\backslash \{0\}\to \mathbb{P}^{p - 1}\) is the
canonical projection:
\[
  Q(\balpha) = [\balpha].
\]
\begin{corollary}
  \label{cor:3}
  \(E'\) is a well-defined uniformly continuous function.
\end{corollary}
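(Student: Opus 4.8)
The plan is to establish the three claims in turn — well-defined, continuous, uniformly continuous — by reducing everything to the behavior of $E$ on the unit sphere $S^{p-1}=\{\balpha\in\R^p:\|\balpha\|=1\}$, where compactness is available. For well-definedness I would invoke homogeneity of degree $0$: $E(c\balpha)=E(\balpha)$ for every $c\neq 0$. The quickest justification is probabilistic — $E(\balpha)$ is the Bayes error of 1D QDA applied to $\balpha^{\sT}\bX$, and replacing $\balpha$ by $c\balpha$ merely rescales this scalar statistic by $c$, an invertible affine change of the line onto which we project, under which QDA is invariant; alternatively one checks directly in~(\ref{eq:3}) that the $\Phi$-arguments are unchanged when $c>0$ and, when $c<0$, are permuted and negated, so that the value is preserved via $\Phi(-x)=1-\Phi(x)$. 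Either way, $E(\balpha)$ depends only on the line $[\balpha]$, so $E'([\balpha]):=E(\balpha)$ is well-defined and $E=E'\circ Q$.

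Next I would set up the metric geometry. Endow $\mathbb{P}^{p-1}$ with the metric $\bar d([\bu],[\bv])=\min\{\|\bu-\bv\|,\|\bu+\bv\|\}$, computed with unit representatives $\bu,\bv$; this realizes $\mathbb{P}^{p-1}$ as the quotient $S^{p-1}/\{\pm 1\}$, and the map $\bu\mapsto[\bu]$ from $S^{p-1}$ onto $\mathbb{P}^{p-1}$ is $1$-Lipschitz (since $\bar d([\bu],[\bv])\le\|\bu-\bv\|$), hence continuous, so $\mathbb{P}^{p-1}$ is a compact metric space. By Corollary~\ref{cor:2}, $E$ is continuous on $\R^p\setminus\{0\}$, so its restriction $E|_{S^{p-1}}$ is continuous on the compact set $S^{p-1}$ and therefore uniformly continuous (Heine--Cantor): for every $\veps>0$ there is $\delta>0$ with $|E(\bu)-E(\bv)|<\veps$ whenever $\bu,\bv\in S^{p-1}$ satisfy $\|\bu-\bv\|<\delta$.

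Finally I would transfer this modulus of continuity to $E'$. Fix $\veps>0$, take the $\delta$ above, and suppose $\bar d([\bu],[\bv])<\delta$ with $\bu,\bv\in S^{p-1}$. If $\|\bu-\bv\|<\delta$ then $|E'([\bu])-E'([\bv])|=|E(\bu)-E(\bv)|<\veps$. If instead $\|\bu+\bv\|<\delta$, then $\|\bu-(-\bv)\|<\delta$ gives $|E(\bu)-E(-\bv)|<\veps$, and since $-\bv$ also represents $[\bv]$ and $E$ is homogeneous of degree $0$ we have $E(-\bv)=E(\bv)=E'([\bv])$, so again $|E'([\bu])-E'([\bv])|<\veps$. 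Thus $E'$ is uniformly continuous, and in particular continuous. I do not anticipate a genuine obstacle; the only items needing a sentence of care are that $\Delta\ge 0$ throughout $\R^p\setminus\{0\}$ so that~(\ref{eq:3}) is everywhere meaningful — this is the inequality $\Delta=r^2+(g^2-1)\log(g^2)\ge r^2\ge 0$ already used in the proof of Proposition~\ref{prop:3} — and the routine check that $\bar d$ is indeed a metric inducing the usual topology on $\mathbb{P}^{p-1}$.
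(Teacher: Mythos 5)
Your argument is correct, and it reaches the conclusion by a somewhat different route than the paper. The paper's proof establishes well-definedness exactly as you do (via homogeneity of degree $0$, which it asserts is immediate from~\eqref{eq:3}), but then works purely topologically: it uses the quotient topology on \(\mathbb{P}^{p-1}\) induced by \(Q\) to show \(E'\) is continuous (openness of \(E'^{-1}(V)\) follows from openness of \(E^{-1}(V)=Q^{-1}(E'^{-1}(V))\)), and then invokes compactness of \(\mathbb{P}^{p-1}\) and Heine--Cantor applied to \(E'\) itself. You instead equip \(\mathbb{P}^{p-1}\) with the explicit quotient metric \(\bar d([\bu],[\bv])=\min\{\|\bu-\bv\|,\|\bu+\bv\|\}\), apply Heine--Cantor to \(E\) restricted to the compact sphere \(\mathbb{S}^{p-1}\), and transfer the modulus of continuity to \(E'\) using \(E(-\bv)=E(\bv)\). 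What your version buys is concreteness: it pins down the metric with respect to which ``uniformly continuous'' is meant (the paper leaves this implicit, even though a metric on \(\mathbb{P}^{p-1}\) is also tacitly used later in the proof of Theorem~\ref{thr:2}, e.g.\ when passing between suprema over \(\mathbb{S}^{p-1}\) and \(\mathbb{P}^{p-1}\)), and it yields an explicit \(\veps\)--\(\delta\) statement; your more careful verification of homogeneity for \(c<0\) (via \(\Phi(-x)=1-\Phi(x)\), or the affine-invariance of 1D QDA) also fills in a step the paper only asserts. What the paper's version buys is brevity and independence from any particular choice of metric. The only loose end on your side is the deferred ``routine check'' that \(\bar d\) is a metric inducing the quotient topology on \(\mathbb{P}^{p-1}\); this is standard (it is the orbit metric of the isometric \(\{\pm1\}\)-action on \(\mathbb{S}^{p-1}\)), but since your uniform-continuity claim is stated relative to \(\bar d\), a sentence confirming it would make the argument self-contained.
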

\begin{proof}
  If \([\balpha] = [\bbeta]\), then \(\balpha = c\bbeta\) for some
  \(c\neq 0\). Thus,
  \(E'([\balpha]) = E(\balpha) = E(c\bbeta) = E(\bbeta) =
  E'([\bbeta])\). This proves \(E'\) is well-defined.

  \(\mathbb{P}^{p - 1}\) is endowed with the quotient topology induced
  by \(Q\), that is, \(U\subseteq \mathbb{P}^{p - 1}\) is open iff
  \(Q^{-1}(U)\subseteq \mathbb{R}^p\backslash\{0\}\) is open. For any
  \(V\subseteq\mathbb{R}\), \(E^{-1}(V) = Q^{-1}(E'^{-1}(V))\) is open
  since \(E\) is continuous. As a result, \(E'^{-1}(V)\) must be open
  as well. This proves \(E'\) is continuous.

  Since \(\mathbb{P}^{p - 1}\) is compact, we conclude \(E'\) is
  uniformly continuous by Heine–Cantor theorem.
\end{proof}

With the help of \(E'\) we can prove the following property of \(E\):
\begin{corollary}
  \label{cor:4}
  \(\argmin_{\balpha} E(\balpha)\) is non-empty.
\end{corollary}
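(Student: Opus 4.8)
The plan is to reduce the optimization over $\R^p\backslash\{0\}$ to an optimization over the compact space $\mathbb{P}^{p-1}$, where a continuous function always attains its minimum. By the homogeneity of $E$ established above, we have the factorization $E = E'\circ Q$ with $Q\colon\R^p\backslash\{0\}\to\mathbb{P}^{p-1}$ the canonical projection, and Corollary~\ref{cor:3} tells us $E'$ is a well-defined (uniformly) continuous function on $\mathbb{P}^{p-1}$.

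First I would invoke the fact that $\mathbb{P}^{p-1}$ is a compact topological space. One clean way to see this: the unit sphere $S^{p-1}\subseteq\R^p$ is compact, and the restriction $Q|_{S^{p-1}}\colon S^{p-1}\to\mathbb{P}^{p-1}$ is continuous and surjective (every one-dimensional subspace contains a unit vector), so $\mathbb{P}^{p-1} = Q(S^{p-1})$ is the continuous image of a compact set, hence compact. Next, since $E'$ is continuous on the compact space $\mathbb{P}^{p-1}$, the extreme value theorem guarantees there exists $[\balpha_0]\in\mathbb{P}^{p-1}$ with $E'([\balpha_0]) = \min_{[\balpha]\in\mathbb{P}^{p-1}} E'([\balpha])$.

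Finally I would transfer this minimizer back. For any $\balpha\neq\bzero$, by homogeneity $E(\balpha) = E'([\balpha]) \geq E'([\balpha_0]) = E(\balpha_0)$, so $\balpha_0\in\argmin_{\balpha\neq\bzero}E(\balpha)$; in particular the argmin set is non-empty. The only mild subtlety — and the one step worth stating carefully rather than the routine topology — is making sure $E'$ really is defined on \emph{all} of $\mathbb{P}^{p-1}$ and takes finite real values, which is exactly the content of Corollary~\ref{cor:2}(1) (namely $E(\balpha)\in(0,1/2]$ for every $\balpha\neq\bzero$) together with the well-definedness in Corollary~\ref{cor:3}; once those are in hand the argument is immediate. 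No part of this presents a genuine obstacle; it is simply the standard ``continuous function on a compact set attains its infimum'' packaged through the quotient.
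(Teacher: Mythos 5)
Your proposal is correct and follows essentially the same route as the paper: apply the Extreme Value Theorem to the continuous function \(E'\) on the compact space \(\mathbb{P}^{p-1}\) (Corollary~\ref{cor:3}), then transfer the minimizer back to \(E\) via the factorization \(E = E'\circ Q\). The only addition is your explicit verification that \(\mathbb{P}^{p-1}\) is compact as the continuous image of the unit sphere, which the paper takes as known.
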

\begin{proof}
  Since \(E'\colon \mathbb{P}^{p - 1}\to\R\) is continuous and its
  domain is compact, \(\argmin E'\) is non-empty by
  Extreme Value Theorem.

  Assume \([\bbeta]\in \argmin E'\) and
  \(\bgamma\) is arbitrary element of \(\R^{p}\backslash\{0\}\), then
  \(E(\bbeta) = E'([\bbeta])\leq E'([\bgamma]) = E(\bgamma)\). So
  \(\bbeta \in \argmin E\).
\end{proof}

The proof of Corollary~\ref{cor:4} shows how we can translate a
property of \(E'\) directly to a property of \(E\). In practice, this
is often possible. With some abuse of notation, it is beneficiary to
identify \(E\) with \(E'\), and write \([\balpha]\) just as
\(\balpha\). With this in mind, we can think of \(E\) as a uniformly
continuous function defined on projective space \(\mathbb{P}^{p-1}\).

\subsection{Proof of Theorem~\ref{thr:2}}
\label{sec:proof-theorem-2}
We show consistency of our algorithm in this section. We start with a
few lemmas. Denote by \(f_n\rightrightarrows f\) if $f_n$ is uniformly
convergent to $f$.
\begin{lemma}
  \label{lem:1}
  Let \(S\) be a set, \(X\), \(Y\) be metric spaces. Assume
  \(f,f_n\colon S\to X\), \(g\colon X\to Y\),
  \(f_n\rightrightarrows f\). If \(g\) is uniformly continuous, then
  \(g\circ f_n\rightrightarrows g\circ f\).
\end{lemma}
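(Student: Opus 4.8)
The plan is to argue directly from the $\varepsilon$–$\delta$ definitions, using uniform continuity of $g$ to convert a uniform bound on $d_X(f_n(s), f(s))$ into a uniform bound on $d_Y(g(f_n(s)), g(f(s)))$. Fix $\varepsilon > 0$. By uniform continuity of $g$ on $X$, there is a $\delta > 0$ such that $d_X(x, x') < \delta$ implies $d_Y(g(x), g(x')) < \varepsilon$ for all $x, x' \in X$; note this $\delta$ depends only on $\varepsilon$, not on the point — this is the whole reason we need \emph{uniform} continuity rather than mere continuity. Next, since $f_n \rightrightarrows f$, there is an $N$ such that for all $n \geq N$ and all $s \in S$ we have $d_X(f_n(s), f(s)) < \delta$.

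Combining the two: for every $n \geq N$ and every $s \in S$, the pair $x = f_n(s)$, $x' = f(s)$ satisfies $d_X(x, x') < \delta$, hence $d_Y(g(f_n(s)), g(f(s))) < \varepsilon$. Since $N$ does not depend on $s$, this says $\sup_{s \in S} d_Y((g \circ f_n)(s), (g \circ f)(s)) \leq \varepsilon$ for all $n \geq N$, which is exactly $g \circ f_n \rightrightarrows g \circ f$.

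There is essentially no obstacle here; the lemma is a standard "uniform continuity preserves uniform convergence" fact, and the only point worth flagging is that continuity alone would not suffice (the required $\delta$ must be uniform over $X$ so that the threshold $N$ can be chosen independently of $s$). In the application, $g$ will be the uniformly continuous error function $\cE$ (or $E'$ on the compact projective space), and $f_n = (r(\hat\balpha), g(\hat\balpha))$-type plug-in estimates, so this lemma is the clean way to pass from uniform convergence of the estimated parameters to uniform convergence of $\hat E^n$ toward $E$.
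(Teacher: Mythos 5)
Your proof is correct and follows essentially the same argument as the paper: use uniform continuity of \(g\) to obtain a \(\delta\) for the given \(\varepsilon\), then use uniform convergence of \(f_n\) to choose \(N\) independently of \(s\), and combine the two bounds.
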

\begin{proof}
  Let \(d_X\) and \(d_Y\) be the metrics on \(X\) and \(Y\)
  respectively. For any \(\epsilon>0\), there exists a \(\delta>0\),
  such that whenever \(d_X(x_1,x_2)\leq\delta\),
  \(d_Y(g(x_1),g(x_2))\leq\epsilon\). For this \(\delta\), there
  exists an \(N>0\), such that whenever \(n\geq N\),
  \(d_X(f(s),f_n(s))\leq \delta\) for all \(s\in S\), thus
  \(d_Y(g\circ f(s),g\circ f_n(s))\leq\epsilon\) for all \(s\in S\).
\end{proof}
\begin{lemma}
  \label{lem:2}
  Let \(X\), \(Y\), \(Z\) be metric spaces. Assume
  \(f,f_n\colon X\to Y\), \(g\colon Y\to Z\),
  \(f_n\rightrightarrows f\). If \(X\) is compact, \(Y\) is complete,
  \(f\), \(f_n\) and \(g\) are all continuous, then
  \(g\circ f_n\rightrightarrows g\circ f\).
\end{lemma}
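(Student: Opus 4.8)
The plan is to reduce Lemma~\ref{lem:2} to Lemma~\ref{lem:1} by producing a single compact set $K\subseteq Y$ that contains the image of $f$ and the images of all the $f_n$; on such a $K$ the merely continuous map $g$ becomes uniformly continuous, and Lemma~\ref{lem:1} then delivers the uniform convergence of the compositions. So the real work concentrates on constructing $K$.

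First I would set $A=f(X)\cup\bigcup_{n\ge 1}f_n(X)$ and $K=\overline{A}$, and argue that $K$ is compact. Since $Y$ is complete and $K$ is closed in $Y$, it is enough to show that $A$ is totally bounded, because a complete totally bounded metric space is compact. Note that each of $f(X)$ and $f_n(X)$ is compact, being a continuous image of the compact space $X$. Given $\delta>0$, the uniform convergence $f_n\rightrightarrows f$ supplies an $N$ with $d_Y(f_n(x),f(x))<\delta/2$ for all $x\in X$ and all $n\ge N$; taking a finite $\delta/2$-net of the compact set $f(X)$ and enlarging its balls to radius $\delta$ then covers $f(X)$ and every $f_n(X)$ with $n\ge N$ simultaneously. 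The finitely many leftover sets $f_1(X),\dots,f_{N-1}(X)$ are each totally bounded, so each carries a finite $\delta$-net; the union of all these nets is a finite $\delta$-net for $A$. As $\delta$ was arbitrary, $A$, and hence $K$, is totally bounded.

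Next, $g|_K\colon K\to Z$ is a continuous map on a compact metric space, so it is uniformly continuous by the Heine--Cantor theorem. I would then regard $f$ and $f_n$ as maps $X\to K$, which is legitimate since their ranges lie in $A\subseteq K$, and observe that $f_n\rightrightarrows f$ is unaffected because $K$ carries the subspace metric of $Y$. Applying Lemma~\ref{lem:1} with codomain $K$ and the uniformly continuous map $g|_K$ yields $g|_K\circ f_n\rightrightarrows g|_K\circ f$, which is exactly $g\circ f_n\rightrightarrows g\circ f$.

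The only genuine obstacle is the compactness of $K$. One is tempted simply to take a closed bounded neighbourhood of the compact set $f(X)$, but in a general complete metric space closed bounded sets need not be compact, so that shortcut fails. The fix is precisely the $\delta$-net argument above, which exploits completeness of $Y$ together with the total boundedness of each $f_n(X)$ and uses the uniform convergence to control the tail of the sequence with a single finite net; everything after that is routine.
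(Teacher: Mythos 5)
Your proposal is correct and follows essentially the same route as the paper: form the union $f(X)\cup\bigcup_n f_n(X)$, show it is totally bounded via a finite net for $f(X)$ plus the uniform-convergence tail plus the finitely many remaining compact images, take its closure in the complete space $Y$ to get a compact set on which $g$ is uniformly continuous by Heine--Cantor, and then invoke Lemma~\ref{lem:1}. The only (inessential) difference is that you take a finite net of the compact image $f(X)$ directly, whereas the paper obtains it through uniform continuity of $f$ on the compact domain $X$; your variant is a slight streamlining of the same argument.
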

\begin{proof}
  Let \(I = f(X)\cup(\bigcup_{n = 1}^{\infty}f_n(X))\), we first show
  \(I\) is totally bounded.

  Since \(X\) is compact, \(f\) is continuous, it must also be
  uniformly continuous. For any \(\epsilon > 0\), there exists
  \(\delta > 0\), such that whenever \(d_X(x, x') < \delta\),
  \(d_Y(f(x), f(x')) < \epsilon/2\). Let \(B_x(\delta)\) be open balls
  centered at \(x\) with radius \(\delta\), then
  \(X\subset \bigcup_{x\in X} B_x(\delta)\). By compactness of \(X\),
  $X$ is covered by finite number of those balls, say,
  \(X\subset \bigcup_{i = 1}^{N_1} B_{x_i}(\delta)\). Because
  \(f_n\rightrightarrows f\), there exists \(N_2 > 0\), such that when
  \(n > N_2\), \(d_Y(f(x), f_n(x)) < \epsilon/2\) for any \(x\in X\).

  We now claim that
  \(f(X)\cup(\bigcup_{n = N_2 + 1}^{\infty}f_n(X))\subset \bigcup_{i =
    1}^{N_1} B_{f(x_i)}(\epsilon)\). To see this, for any \(x\in X\),
  there is an \(i_0\in\{1, \dots, N_1\}\) such that
  \(d_X(x, x_{i_0}) < \delta\), thus
  \(d_Y(f(x), f(x_{i_0})) < \epsilon/2\). Moreover, if \(n > N_2\),
  \(d_Y(f(x), f_n(x)) < \epsilon/2\), so
  \(d_Y(f_n(x), f(x_{i_0}))\leq d_Y(f(x), f(x_{i_0})) + d_Y(f(x),
  f_n(x)) < \epsilon\). This proves that
  \(f(X)\cup(\bigcup_{n = N_2 + 1}^{\infty}f_n(X))\) is covered by
  finite \(\epsilon\)-balls.

  \(\bigcup_{n = 1}^{N_2}f_n(X)\) is compact and totally bounded
  because it is finite union of compact sets. As a result, it can also
  be covered by finite \(\epsilon\)-balls. Combining these two
  collections of \(\epsilon\)-balls, we have found a finite cover of
  \(I\). Thus \(I\) is totally bounded.

  Since \(Y\) is complete, \(\bar I\), the closure of \(I\), must be
  complete and totally bounded, and thus compact. We can restrict
  \(g\) to \(\bar I\) such that \(g|_{\bar I}\) becomes uniformly
  continuous. Obviously, \(g\circ f_n = g|_{\bar I}\circ f_n\),
  \(g\circ f = g|_{\bar I}\circ f\). By Lemma~\ref{lem:1}, we have
  \(g\circ f_n\rightrightarrows g\circ f\).

\end{proof}
\begin{lemma}
  \label{lem:3}
  $f$ and $\{f_n\}_{n=1}^{\infty}$ are functions on a compact metric
  space $X$. Assume that $f$ is continuous, and has a unique minimizer
  \(x_*=\argmin_X f\). If \(f_n\rightrightarrows f\), then
  \(x_n\to x_*\), where \(x_n\in\argmin_X f_n\).
\end{lemma}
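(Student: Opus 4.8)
The plan is to show that any minimizing sequence $\{x_n\}$ with $x_n \in \argmin_X f_n$ must converge to $x_*$, using the standard compactness-plus-uniqueness argument. First I would note that it suffices to prove that every subsequence of $\{x_n\}$ has a further subsequence converging to $x_*$; since $X$ is a compact metric space, this will then force $x_n \to x_*$. So fix an arbitrary subsequence. By sequential compactness of $X$, it has a convergent sub-subsequence; relabel it as $\{x_{n_k}\}$ with $x_{n_k} \to \bar x$ for some $\bar x \in X$. The goal is to show $\bar x = x_*$, and then invoke uniqueness of the minimizer.

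The key step is a two-sided estimate comparing $f(\bar x)$ with $f(x_*)$. For the easy direction, $f(x_*) \le f(\bar x)$ holds trivially because $x_*$ is the global minimizer of $f$. For the reverse direction, I would estimate
\begin{align*}
  f(x_{n_k}) &\le \big| f(x_{n_k}) - f_{n_k}(x_{n_k}) \big| + f_{n_k}(x_{n_k})
  \le \|f_{n_k} - f\|_\infty + f_{n_k}(x_*) \\
  &\le \|f_{n_k} - f\|_\infty + \big| f_{n_k}(x_*) - f(x_*) \big| + f(x_*)
  \le 2\|f_{n_k} - f\|_\infty + f(x_*),
\end{align*}
where $\|\cdot\|_\infty$ denotes the sup norm over $X$, the middle inequality uses $f_{n_k}(x_{n_k}) \le f_{n_k}(x_*)$ since $x_{n_k}$ minimizes $f_{n_k}$, and $f_n \rightrightarrows f$ makes $\|f_{n_k} - f\|_\infty \to 0$. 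Letting $k \to \infty$ and using continuity of $f$ together with $x_{n_k} \to \bar x$ gives $f(\bar x) = \lim_k f(x_{n_k}) \le f(x_*)$. Combining the two directions, $f(\bar x) = f(x_*)$, so $\bar x$ is also a minimizer of $f$; by uniqueness $\bar x = x_*$.

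Since every subsequence of $\{x_n\}$ thus admits a further subsequence converging to the single point $x_*$, the whole sequence converges to $x_*$, which completes the proof. The only mild subtlety, rather than a real obstacle, is justifying the subsequence-of-every-subsequence reduction and making sure we use continuity of $f$ exactly where needed (to pass from $x_{n_k}\to\bar x$ to $f(x_{n_k})\to f(\bar x)$); everything else is a routine interleaving of the uniform-convergence bound with the defining minimality inequalities. Note that in the application to Theorem~\ref{thr:2} one takes $X = \mathbb{P}^{p-1}$, which is indeed a compact metric space, and $f = E$, $f_n = \hat E^n$; the required uniform convergence $\hat E^n \rightrightarrows E$ (almost surely) is what the preceding lemmas and corollaries are designed to deliver.
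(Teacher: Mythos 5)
Your proof is correct and follows essentially the same route as the paper's: extract a convergent subsequence by compactness, use the chain $f(x_{n_k}) \le f_{n_k}(x_{n_k}) + \epsilon \le f_{n_k}(x_*) + \epsilon \le f(x_*) + 2\epsilon$ together with continuity of $f$ to conclude the limit point minimizes $f$, then invoke uniqueness. The only difference is presentational: you use the every-subsequence-has-a-further-subsequence criterion, while the paper frames it as a contradiction with a subsequence staying outside a ball around $x_*$; the substance is identical.
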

\begin{proof}
  Suppose \(x_n\not\to x_*\), then there exists an open ball \(B\)
  centered at \(x_*\), and a subsequence \(x_{n(m)}\subset
  B^c\). Since \(X\) is compact, we can further find a subsequence
  \(x_{n(m(l))}\) and \(t\in X\) such that \(x_{n(m(l))}\to
  t\). \(B^c\) is closed, thus \(t\in B^c\) and \(t\neq x_*\). For any
  \(\epsilon>0\), there is \(l_0>0\), such that whenever
  \(l\geq l_0\), \(|f(x)-f_{n(m(l))}(x)|\leq \epsilon/2\) for all
  \(x\in X\). So
  \(f(x_{n(m(l))})\leq f_{n(m(l))}(x_{n(m(l))})+\epsilon/2\leq
  f_{n(m(l))}(x_*)+\epsilon/2\leq f(x_*) +\epsilon\). This yields
  \(f(t)=f(\lim_l x_{n(m(l))})=\lim_l f(x_{n(m(l))})\leq f(x_*)\),
  which contradicts with the uniqueness of global minimizer of
  \(f\). Thus, we can conclude \(x_n\to x_*\).
\end{proof}

\noindent
\textbf{Proof of Theorem~\ref{thr:2}:}

By strong Law of Large Numbers, we have
\(\hat\bmu^n_k\xrightarrow{a.s.}\bmu_k\) and
\(\hat\bSigma^n_k\xrightarrow{a.s.}\bSigma_k\). By Egorov's theorem,
for any \(i\in \mathbb{N}\), there exists an event \(\Omega_i\) such
that \(P(\Omega_i^c) < 1/i\), and
\(\hat\bmu^n_k(\omega)\rightrightarrows\bmu_k\) and
\(\hat\bSigma^n_k(\omega)\rightrightarrows\bSigma_k\) for \(k=0,1\) on
\(\Omega_i\), where \(\R^p\) is equipped with Euclidean norm
\(\|\cdot\|_2\) and \(\R^{p\times p}\) is equipped with Frobenius norm
\(\|\cdot\|_F\). Let \(\lambda^{\mathrm{min}}_k\) be the smallest
eigenvalue of \(\bSigma_k\), and
\(\lambda = \min\{\lambda^{\mathrm{min}}_0, \lambda^{\mathrm{min}}_1\}
> 0\). There exists an integer \(N > 0\) such that whenever
\(n \geq N\),
\(\|\bSigma_k - \hat\bSigma^n_k(\omega)\|_F \leq \lambda/2\) for any
\(\omega\in\Omega_i\). From now on, we shall fix an
\(\omega\in\Omega_i\), and omit ``\(\omega\)'' for easy presentation.

Consider the following subsequences
\(\hat\bmu^{n(l)}_k = \hat\bmu^{N + l}_k\) and
\(\hat\bSigma^{n(l)}_k = \hat\bSigma^{N + l}_k\), \(k = 0, 1\). We
want to show that
\(\hat E^{n(l)}(\balpha)=\mathcal{E}((\hat m_0^{n(l)} - \hat
m_1^{n(l)})/\hat\sigma_1^{n(l)},
\hat\sigma_0^{n(l)}/\hat\sigma_1^{n(l)})\) converges uniformly to
\(E(\balpha)=\mathcal{E}((m_0 - m_1)/\sigma_1, \sigma_0/\sigma_1)\) on
\(\balpha\in \mathbb{S}^{p-1}\), where
\(m_k(\balpha)=\balpha^{\mathsf T}\bmu_k\),
\(\sigma_k(\balpha)=\sqrt{\balpha^{\mathsf T}\bSigma_k\balpha}\),
\(\hat m_k^n(\balpha)=\balpha^{\mathsf T}\hat\bmu{}_k^n\),
\(\hat\sigma^n_k(\balpha)=\sqrt{\balpha^{\mathsf
    T}\hat\bSigma{}_k^n\balpha}\), and \(\mathcal{E}\) is defined as
in Appendix~\ref{con-dif-err}. We also use \(\|\cdot\|_2\) to denote
the matrix operator norm induced by Euclidean norm. For any
\(\balpha \in \mathbb{S}^{p - 1}\),
\begin{align*}
  \left|\balpha^{\mathsf T}\bSigma_k\balpha-\balpha^{\mathsf T}\hat\bSigma_k^{n(l)}\balpha\right| & \leq \|\balpha\|_2\left\|\parens*{\bSigma_k-\hat\bSigma_k^{n(l)}}\balpha\right\|_2\\
                                                                                                  &\leq \|\balpha\|_2\left\|\bSigma_k-\hat\bSigma_k^{n(l)}\right\|_2\|\balpha\|_2\\
                                                                                                  &=\left\|\bSigma_k-\hat\bSigma_k^{n(l)}\right\|_2\\
                                                                                                  & \leq \left\|\bSigma_k-\hat\bSigma_k^{n(l)}\right\|_F, 
\end{align*}
which has the following consequences:
\begin{enumerate}
\item Since \(\|\bSigma_k-\hat\bSigma_k^{n(l)}\|_F\leq \lambda/2\),
  \(\balpha^{\mathsf T}\hat\bSigma_k^{n(l)}\balpha\geq
  \balpha^{\mathsf T}\bSigma_k\balpha - \lambda/2 \geq
  \lambda^{\mathrm{min}}_k - \lambda/2 \geq \lambda/2\). This implies
  \(\hat\sigma^n_k(\balpha) \geq \sqrt{\lambda/2}\).
\item Since \(\|\bSigma_k-\hat\bSigma_k^{n(l)}\|_F \to 0\) as
  \(l\to\infty\), \(\balpha^{\mathsf T}\hat\bSigma_k^{n(l)}\balpha\)
  converges to \(\balpha^{\mathsf T}\bSigma_k\balpha\)
  uniformly. \(\sqrt{\cdot}\) is uniformly continuous, so
  \(\hat\sigma_k^{n(l)}(\balpha)\) converges to \(\sigma_k(\balpha)\)
  uniformly by Lemma~\ref{lem:1}.
\end{enumerate}
Similarly, we can prove \(\hat m_k^{n(l)}(\balpha)\) converges to
\(m_k(\balpha)\) uniformly.

Let \(f=(m_0,m_1,\sigma_0,\sigma_1)\) and
\(f^l = (\hat{m}_0^{n(l)}, \hat{m}_1^{n(l)}, \hat{\sigma}_0^{n(l)},
\hat{\sigma}_1^{n(l)})\) be functions from compact
\(\mathbb{S}^{p - 1}\) to complete
\(\R^2 \times [\sqrt{\lambda/2}, \infty)^2\). We have proved that % chktex 9
\(f^l \rightrightarrows f\) as \(l\to\infty\), so we can apply
Lemma~\ref{lem:2} and conclude \(\hat E^{n(l)} \rightrightarrows E\)
on \(\mathbb{S}^{p-1}\). Since
\(\sup_{[\balpha]\in \mathbb{P}^{p - 1}}|\hat E^{n(l)}([\balpha]) -
E([\balpha])| = \sup_{\balpha\in \mathbb{S}^{p -
    1}}|\hat{E}^{n(l)}(\balpha) - E(\balpha)|\to 0\), we also have
\(\hat E^{n(l)} \rightrightarrows E\) as functions on
\(\mathbb{P}^{p - 1}\). By Lemma~\ref{lem:3},
\(\hat \balpha_0^{n(l)}\to\balpha_0\) and thus
\(\hat \balpha_0^n(\omega)\to\balpha_0\). Recall this is true for any
\(\omega\in\Omega_i\) and any \(i\in\mathbb{N}\), so we have
\[
  \hat \balpha_0^n(\omega)\to\balpha_0, \;\forall
\omega\in\bigcup_{i\in\mathbb{N}}\Omega_i.
\]
Clearly, \(P((\bigcup_{i\in\mathbb{N}}\Omega_i)^c) = 0\). As a result,
\(\hat \balpha_0^n\xrightarrow{a.s.}\balpha_0\).

\subsection{Proof of Proposition~\ref{prop:2}}
\label{sec:proof-proposition-2}
For \(k = 0, 1\), since \(\tilde\bx^i_k = \bb + \bA\bx^i_k\), we have
\(\hat{\tilde\bmu}_k = (1/n_k)\sum_{i = 1}^{n_k}\tilde\bx^i_k = \bb +
(1/n_k)\bA\sum_{i = 1}^{n_k}\bx^i_k = \bb + \bA\hat{\bmu}_k\), and
\(\hat{\tilde\bSigma}_k = (1/(n_k - 1))\sum_{i = 1}^{n_k}(\tilde\bx^i_k -
\hat{\tilde\bmu}_k)(\tilde\bx^i_k - \hat{\tilde\bmu}_k)^{\mathsf{T}} = (1/(n_k -
1))\sum_{i = 1}^{n_k}\bA(\bx^i_k - \hat{\bmu}_k)(\bx^i_k -
\hat{\bmu}_k)^{\mathsf{T}} \bA^{\mathsf{T}} = \bA\hat{\bSigma}_k
\bA^{\mathsf{T}}\).

Given any directions \(\balpha\) and
\(\tilde\balpha = (\bA^{\sT})^{-1}\balpha\), for \(k = 0, 1\),
\(\hat{\tilde m}_k = \tilde\balpha^{\mathsf{T}}\hat{\tilde\bmu}_k =
\balpha^{\mathsf{T}}\bA^{-1}(\bb + \bA\hat{\bmu}_k) =
\balpha^{\mathsf{T}}\bA^{-1}\bb + \hat{m}_k\),
\(\hat{\tilde\sigma}_k =
\tilde\balpha^{\mathsf{T}}\hat{\tilde\bSigma}_k\tilde\balpha =
\balpha^{\mathsf{T}}\bA^{-1}(\bA\hat{\bSigma}_k \bA^{\mathsf{T}})
(\bA^{\sT})^{-1}\balpha = \hat{\sigma}_k\). Thus,
\(\hat{\tilde m}_0 - \hat{\tilde m}_1 = \hat{m}_0 - \hat{m}_1\), and
this implies that \(\hat{E}(\balpha) = \hat{\tilde E}(\tilde\balpha)\)
by equation~\eqref{eq:3}. In other words, \(\hat{E}\) and
\(\hat{\tilde E}\) only differ by a nonsingular linear transformation
of the domain, defined by \((\bA^{\sT})^{-1}\).

By assumption, \([\hat{\balpha}_0]\) and \([\hat{\tilde\balpha}_0]\)
are unique minimizers of \(\hat{E}\) and \(\hat{\tilde E}\)
respectively, so we have
\([\hat{\balpha}_0] = [(\bA^{\sT})^{-1}\hat{\tilde\balpha}_0]\). As a
result, there exists a constant \(c \neq 0\) such that
\(\hat{\balpha}_0 = c(\bA^{\sT})^{-1}\hat{\tilde\balpha}_0\).

\subsection{Coordinate Descent Algorithm}
\label{sec:coord-desc-algor}
Assume \(f\) is a function on \(\R^p\). Given an initial
\(\bx_0\in\mathbb{S}^{p-1}\subset\R^p\), a prefixed number of maximal
iterations \(m > 0\) and a tolerance level \(\epsilon > 0\), the
coordinate descent algorithm adapted for our method is described as
the following:
\begin{algorithm}[H]
  \label{alg:2}
  \caption{Coordinate Descent}
  \begin{algorithmic}[1]
    \Procedure{main}{$f,\bx_0,m, \epsilon$}
        \State \(i \gets 0\)
        \Repeat
          \State \(\bx_{i + 1} \gets\)
          \textsc{one\_iter\_coordinate\_descent}(\(f, \bx_i\))
          \State \(i \gets i + 1\)
        \Until{\(i= m\) \textbf{or}
          \(|f(\bx_i) - f(\bx_{i - 1})| \leq \epsilon\)}
        \State
        \Return \(\bx_i\) and \(f(\bx_i)\)
    \EndProcedure
  \end{algorithmic}
  \hrulefill
  \begin{algorithmic}[1]
    \Procedure{one\_iter\_coordinate\_descent}{$g, \by$}
      \State \(p \gets\) length of \(\by\)
      \For{\(j \gets 1, \dots, p\)}
        \State
        \(g_j(\ast) \gets g(y_1, \dots, y_{j - 1}, \ast, y_{j + 1}, \dots,
        y_p)\)
        \State \(y_j \gets\)
        \textsc{one\_dim\_coordinate\_descent}(\(g_j, y_j\))
      \EndFor
      \State \(\by \gets \by/\|\by\|_2\)
      \State \Return \(\by\)
    \EndProcedure
  \end{algorithmic}
  \hrulefill
  \begin{algorithmic}[1]
    \Procedure{one\_dim\_coordinate\_descent}{$h, t$}
      \State \(h_t \gets\) a quadratic approximation of \(h\) at
      \(t\)
      \If{\(h_t\) is concave up}
        \State \(t \gets \argmin h_t\)
      \ElsIf{\(h\) is increasing at \(t\)}
        \State \(t \gets t - 0.1\)
      \Else
        \State \(t \gets t + 0.1\)
      \EndIf
      \State \Return \(t\)
    \EndProcedure
  \end{algorithmic}
\end{algorithm}

Remark 1. Empirically, the quadratic approximation $h_t$ is not always
concave up when we update each coordinate. If it is concave down, we
update the coordinate by adding or subtracting a fixed step size of
0.1 to avoid saddle points.

Remark 2. It is possible that the sample covariance matrices
\(\hat{\bSigma}_0\), \(\hat{\bSigma}_1\) are singular. We add a small
scalar matrix (e.g. \(10^{-7}\bI_p\)) to \(\hat{\bSigma}_0\) and
\(\hat{\bSigma}_1\).

\bibliographystyle{biometrika} \bibliography{reference}
 
\end{document}